\newcommand{\eq}{\leftrightarrow}
\newcommand{\Eq}{\Leftrightarrow}
\newcommand{\imp}{\rightarrow}
\newcommand{\Imp}{\Rightarrow}
\newcommand{\Pmi}{\Leftarrow}
\newcommand{\et}{\wedge}
\renewcommand{\phi}{\varphi}
\newcommand{\inter}{\cap}
\newcommand{\Inter}{\bigcap}
\newcommand{\ce}{\colonequals}
\newcommand{\cce}{\coloncolonequals}
\newcommand{\M}{\widehat{K}}
\newcommand{\N}{\widehat{D}}
\newcommand{\localK}{\mathcal L _K^{\mathsf{loc}}}
\newcommand{\glocalK}{\mathcal L _K^{\mathsf{gloc}}}
\newcommand{\localD}{\mathcal L _D^{\mathsf{loc}}}
\newcommand{\glocalD}{\mathcal L _D^{\mathsf{gloc}}}
\newcommand{\fctdef}{\bowtie^{\mathcal F}}
\newcommand{\fctsat}{\Vvdash^{\mathcal F}}
\newcommand{\afdef}{\bowtie} 
\newcommand{\afsat}{\Vvdash} 
\newcommand{\nafsat}{\mathrel{\rlap{$\Vvdash$}\diagup}} 
\newcommand{\nfctsat}{\nafsat^{\mathcal F}}
\newcommand{\twoVsat}{\Vdash^{\mathcal F}}
\newcommand{\ntwoVsat}{\nVdash^{\mathcal F}}
\newcommand{\C}{\mathcal C}
\newcommand{\VV}{\mathcal V}
\newcommand{\FF}{\mathcal F}
\newcommand{\weg}[1]{}
\newcommand{\sstar}{\mathsf{star}}
\title{On Two- and Three-valued Semantics\\ 
for Impure Simplicial Complexes}
\author{Hans van Ditmarsch\institute{CNRS, Universit\'e de Toulouse, IRIT\\ Toulouse, France}
\email{hans.van-ditmarsch@irit.fr}
\and
Roman Kuznets \qquad\qquad Rojo  Randrianomentsoa
\institute{TU Wien
\qquad\qquad\strut
\\ 
Vienna, Austria
\qquad\qquad\strut}
\email{\{roman.kuznets+rojo.randrianomentsoa\}@tuwien.ac.at}\thanks{Funded by the Austrian Science Fund (FWF)  ByzDEL project (P33600).}
}
\theoremstyle{plain}
\newtheorem{theorem}{Theorem}
\newtheorem{lemma}[theorem]{Lemma}
\newtheorem{proposition}[theorem]{Proposition}
\newtheorem{corollary}[theorem]{Corollary}
\theoremstyle{definition}
\newtheorem{definition}[theorem]{Definition}
\theoremstyle{remark}
\newtheorem{remark}[theorem]{Remark}
\newtheorem{example}[theorem]{Example}
\begin{document}

\maketitle

\begin{abstract}
Simplicial complexes are a convenient semantic primitive to reason about processes (agents) communicating with each other in synchronous and asynchronous computation. 
Impure simplicial complexes distinguish active processes from crashed ones, in other words, agents that are alive from agents that are dead. 
In order to rule out that dead agents reason about themselves and about other agents, three-valued epistemic semantics have been proposed where, in addition to the usual values true and false, the third value stands for undefined: 
the knowledge of dead agents is undefined and so are the propositional variables describing their local state. 
Other semantics for impure complexes are two-valued where a dead agent  knows everything. 
Different choices in designing a semantics produce  different three-valued semantics, and  also different two-valued semantics. 
In this work, we categorize the available choices by discounting the bad ones, identifying the equivalent ones, and connecting the non-equivalent ones via a translation.  The main result of the paper is identifying the main relevant distinction to be the number of truth values and  bridging this difference by means of a novel embedding from three-~into two-valued semantics. 
This translation also enables us to highlight quite fundamental modeling differences underpinning various two- and three-valued approaches in this area of combinatorial topology. In particular, pure complexes can be defined as those invariant under the translation.
\end{abstract}

\section{Introduction}
\label{introduction}

This contribution is on a topic where combinatorial topology and epistemic logic meet, using a categorically motivated duality between 
simplicial complexes, a structural primitive of a topological nature, and 
Kripke models, a structural primitive omnipresent in epistemic logic, and modal logics in general. 
Such simplicial complexes are used to describe synchronous and asynchronous computation, and the focus of our interest is simplicial complexes representing that some processes are alive~(active), whereas other processes are dead~(have~crashed). 
Such simplicial complexes are called impure. 
As we will see, their properties can be described in a two-valued epistemic semantics, but also in a three-valued epistemic semantics where the third value means undefined. 
That represents that formulas cannot be evaluated, as far as the knowledge or other features of a dead agent are concerned. 
However, there are also ways to represent dead agents in a standard truth-valued (two-valued) semantics: namely, by a dead agent knowing the false proposition (and, therefore, everything~--- in Kripke semantics it is common to say that such agents are mad, or crazy). 

\emph{In this contribution we compare two-valued and three-valued epistemic semantics for impure complexes, by way of translations capturing the three-valued meaning in a two-valued way while using the same logical language. We also discuss at length the consequences for truth and validity. These novel translations allow to highlight the respective advantages of such two-~and three-valued approaches.} 

In the remainder of this introduction we survey the research area, give detailed informal examples, and summarize the results that we will achieve. 

\paragraph*{Survey of the literature}
Combinatorial topology has been used in distributed computing to model concurrency and asynchrony since~\cite{BiranMZ90,FischerLP85,luoietal:1987}, with higher-dimensional topological properties considered in~\cite{HS99,herlihyetal:2013}. The basic structure in combinatorial topology is the \emph{simplicial complex}, a downward closed  collection of subsets, called \emph{simplices}, of a set of \emph{vertices}. Geometric manipulations such as subdivision have natural combinatorial counterparts. 

Epistemic logic investigates knowledge and belief, and change of knowledge and belief, in multi-agent systems~\cite{hintikka:1962}. Knowledge change was extensively modeled in temporal epistemic logics~\cite{Alur2002,halpernmoses:1990,Pnueli77,dixonetal.handbook:2015} and more recently in dynamic epistemic logics~\cite{baltagetal:1998,hvdetal.del:2007,moss.handbook:2015}, including synchronous~\cite{jfaketal.JPL:2009} and asynchronous~\cite{degremontetal:2011,BalbianiDG22} interpretations of dynamics. The basic structure in epistemic semantics is the \emph{Kripke model}, defined by a set of \emph{states} (or~\emph{worlds}), a collection of binary \emph{indistinguishability} relations between those states, used to interpret knowledge, and a collection of subsets of states, called \emph{properties}, for where propositional variables are true.

Fairly recently, an epistemic logic interpreted on simplicial complexes was proposed in~\cite{ledent:2019,GoubaultLR21,goubaultetalpostdali:2021}, including exact correspondence between simplicial complexes and Kripke models. Also, in those and other works~\cite{PflegerS18,diego:2021,hvdetal.simpl:2022}, the action models of~\cite{baltagetal:1998} are used to model distributed computing tasks and algorithms, with asynchrony treated as in~\cite{degremontetal:2011}. Action models, which are like Kripke models, also have counterparts that are like simplicial complexes~\cite{ledent:2019,hvdetal.simpl:2022}.

Even more recently, epistemic semantics for impure complexes have been proposed. In impure complexes some processes have crashed, i.e.,~are dead. This typically represents synchronous computation (with timeouts), as in asynchronous computation inactive processes can in principle become active again later~\cite{herlihyetal:2013}. Dead agents~--- and  live agents' uncertainty about whether those are dead~--- need some representation in a modal logical semantics. Choices occurring in the literature are to consider knowledge of dead agents either undefined~\cite{hvdetal.deadalive:2022} or  trivial~\cite{GoubaultLR22}. When such knowledge is undefined, it is not allowed to interpret~$K_a \phi$ if $a$~is dead. This results in a three-valued semantics~\cite{hvdetal.deadalive:2022} and an accompanying \mbox{$\mathbf{S5}$-like} modal logic~\cite{rojo:2023}. When such knowledge is trivial, we mean that $K_a \bot$~is true (agents going mad, or crazy; in epistemic logic, a standard trick coming with an empty accessibility relation~\cite{baltagetal:1998,gerbrandyetal:1997}), from which we derive that $K_a \phi$~is true for all formulas~$\phi$. This remains a two-valued semantics~\cite{GoubaultLR22},  and the accompanying logic is $\mathbf{KB4}$-like. An agent who is dead is not so unlike an agent who is incorrect, as in~\cite{abs-2106-11499,hvdetal.aiml:2022}. The approach of~\cite{GoubaultLR22} was generalized from individual knowledge to distributed knowledge, and from simplicial complexes to  (semi-)simplicial sets~\cite{goubaultetal:2023}.

\paragraph*{Informal examples}
Figure~\ref{figure.figure} displays some simplicial complexes and, for the benefit of the reader more familiar with that representation, corresponding Kripke models. In the depictions of Kripke models we assume reflexivity and symmetry of accessibility relations. The depicted simplicial complexes are for three agents. The vertices of a simplex are required to be labeled with different agents. A maximum size simplex, called facet, therefore, consists of three vertices. This is called dimension~$2$. These are the triangles in the figure. For two agents we get lines/edges, for four agents we get tetrahedra,~etc. A~facet corresponds to a state in a Kripke model. A label like~$0_a$ on a vertex represents that it is a vertex for agent~$a$ and that agent~$a$'s local state has value~$0$,~etc. We can see this as the boolean value of a local proposition where $0$~means false and $1$~means true. Together these labels determine the valuation in a corresponding Kripke model, for example in states labeled~$0_a1_b1_c$ agent~$a$'s value is~$0$, $b$'s~is~$1$, and $c$'s~is~$1$. The single triangle in Fig.~\ref{figure.figure}.iii corresponds to the singleton $\mathbf{S5}$~model below it, in~Fig.~\ref{figure.figure}.vi. With two triangles, if they only intersect in~$a$, as in Fig.~\ref{figure.figure}.i, it means that agent~$a$ cannot distinguish these states, as in Fig.~\ref{figure.figure}.iv, so that $a$~is uncertain about the value of~$b$; whereas if the triangles intersect in~$a$~and~$c$, as in Fig.~\ref{figure.figure}.ii, both~$a$~and~$c$ are uncertain about the value of~$b$, so in corresponding Fig.~\ref{figure.figure}.v the two states are indistinguishable for the two agents~$a$~and~$c$.  

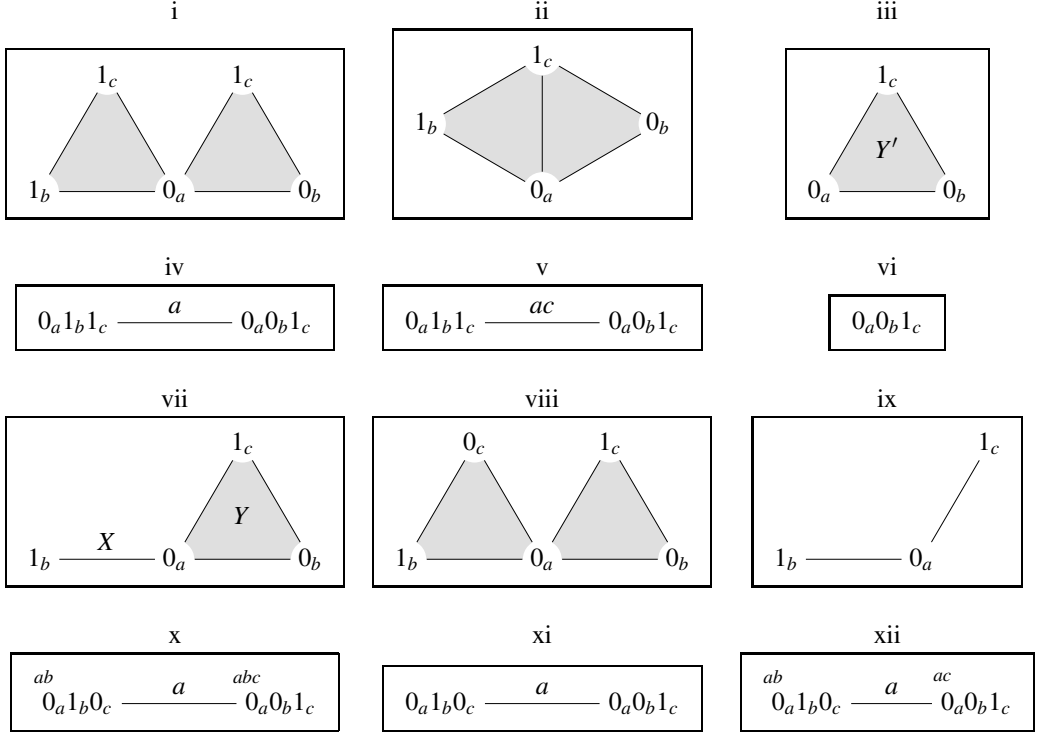
\begin{figure}[ht] \center
\scalebox{.9}{
\begin{tabular}{ccc}
i & ii & iii \\
\fbox{
\begin{tikzpicture}[round/.style={circle,fill=white,inner sep=1}]
\fill[fill=gray!25!white] (2,0) -- (4,0) -- (3,1.71) -- cycle;
\fill[fill=gray!25!white] (0,0) -- (2,0) -- (1,1.71) -- cycle;
\node[round] (b1) at (0,0) {$1_b$};
\node[round] (b0) at (4,0) {$0_b$};
\node[round] (c1) at (3,1.71) {$1_c$};
\node[round] (lc1) at (1,1.71) {$1_c$};
\node[round] (a0) at (2,0) {$0_a$};

\draw[-] (b1) -- (a0);
\draw[-] (b1) -- (lc1);
\draw[-] (a0) -- (lc1);
\draw[-] (a0) -- (b0);
\draw[-] (b0) -- (c1);
\draw[-] (a0) -- (c1);
\end{tikzpicture}
}
&
\fbox{
\begin{tikzpicture}[round/.style={circle,fill=white,inner sep=1}]
\fill[fill=gray!25!white] (2,0) -- (2,2) -- (3.71,1) -- cycle;
\fill[fill=gray!25!white] (0.29,1) -- (2,0) -- (2,2) -- cycle;
\node[round] (b1) at (.29,1) {$1_b$};
\node[round] (b0) at (3.71,1) {$0_b$};
\node[round] (c1) at (2,2) {$1_c$};
\node[round] (a0) at (2,0) {$0_a$};

\draw[-] (b1) -- (a0);
\draw[-] (b1) -- (c1);
\draw[-] (a0) -- (b0);
\draw[-] (b0) -- (c1);
\draw[-] (a0) -- (c1);
\end{tikzpicture}
}
& 
\fbox{
\begin{tikzpicture}[round/.style={circle,fill=white,inner sep=1}]
\fill[fill=gray!25!white] (2,0) -- (4,0) -- (3,1.71) -- cycle;
\node[round] (b0) at (4,0) {$0_b$};
\node[round] (c1) at (3,1.71) {$1_c$};
\node[round] (a0) at (2,0) {$0_a$};
\node (f1) at (3,.65) {$Y'$};

\draw[-] (a0) -- (b0);
\draw[-] (b0) -- (c1);
\draw[-] (a0) -- (c1);
\end{tikzpicture}
} \\ && \\
iv & v & vi \\ 
\fbox{
\begin{tikzpicture}
\node (010) at (.5,0) {$0_a1_b1_c$};
\node (001) at (3.5,0) {$0_a0_b1_c$};
\draw[-] (010) -- node[above] {$a$} (001);
\end{tikzpicture}
}
&
\fbox{
\begin{tikzpicture}
\node (010) at (.5,0) {$0_a1_b1_c$};
\node (001) at (3.5,0) {$0_a0_b1_c$};
\draw[-] (010) -- node[above] {$ac$} (001);
\end{tikzpicture}
}
&
\fbox{
\begin{tikzpicture}
\node (010) at (.5,0) {$0_a0_b1_c$};
\end{tikzpicture}
} \\ \ \\
vii & viii & ix \\
\fbox{
\begin{tikzpicture}[round/.style={circle,fill=white,inner sep=1}]
\fill[fill=gray!25!white] (2,0) -- (4,0) -- (3,1.71) -- cycle;
\node[round] (b1) at (0,0) {$1_b$};
\node[round] (b0) at (4,0) {$0_b$};
\node[round] (c1) at (3,1.71) {$1_c$};
\node[round] (a0) at (2,0) {$0_a$};
\node (f1) at (3,.65) {$Y$};

\draw[-] (b1) -- node[above] {$X$} (a0);
\draw[-] (a0) -- (b0);
\draw[-] (b0) -- (c1);
\draw[-] (a0) -- (c1);
\end{tikzpicture}
}
&
\fbox{
\begin{tikzpicture}[round/.style={circle,fill=white,inner sep=1}]
\fill[fill=gray!25!white] (2,0) -- (4,0) -- (3,1.71) -- cycle;
\fill[fill=gray!25!white] (0,0) -- (2,0) -- (1,1.71) -- cycle;
\node[round] (b1) at (0,0) {$1_b$};
\node[round] (b0) at (4,0) {$0_b$};
\node[round] (c1) at (3,1.71) {$1_c$};
\node[round] (lc1) at (1,1.71) {$0_c$};
\node[round] (a0) at (2,0) {$0_a$};

\draw[-] (b1) -- (a0);
\draw[-] (b1) -- (lc1);
\draw[-] (a0) -- (lc1);
\draw[-] (a0) -- (b0);
\draw[-] (b0) -- (c1);
\draw[-] (a0) -- (c1);
\end{tikzpicture}
}
&
\fbox{
\begin{tikzpicture}[round/.style={circle,fill=white,inner sep=1}]
\node[round] (b1) at (0,0) {$1_b$};
\node[round] (c1) at (3,1.71) {$1_c$};
\node[round] (a0) at (2,0) {$0_a$};

\draw[-] (b1) -- (a0);
\draw[-] (a0) -- (c1);
\end{tikzpicture}
} \\ \ \\
x & xi & xii \\ 
\fbox{
\begin{tikzpicture}
\node (010l) at (0,0.4) {\scriptsize$ab$};
\node (001l) at (3,0.4) {\scriptsize$abc$};
\node (010) at (.5,0) {$0_a1_b0_c$};
\node (001) at (3.5,0) {$0_a0_b1_c$};
\draw[-] (010) -- node[above] {$a$} (001);
\end{tikzpicture}
}
&
\fbox{
\begin{tikzpicture}
\node (010) at (.5,0) {$0_a1_b0_c$};
\node (001) at (3.5,0) {$0_a0_b1_c$};
\draw[-] (010) -- node[above] {$a$} (001);
\end{tikzpicture}
}
&
\fbox{
\begin{tikzpicture}
\node (010l) at (0,0.4) {\scriptsize$ab$};
\node (001l) at (2.5,0.4) {\scriptsize$ac$};
\node (010) at (.5,0) {$0_a1_b0_c$};
\node (001) at (3,0) {$0_a0_b1_c$};
\draw[-] (010) -- node[above] {$a$} (001);
\end{tikzpicture}
}
\end{tabular}
}
\caption{Examples of pure and impure simplicial complexes and corresponding Kripke models}
\label{figure.figure}
\end{figure}

The current state of the distributed system is represented by a distinguished facet of the simplicial complex, just as we need a distinguished (actual) state in a Kripke model in order to evaluate propositions. For example, in the leftmost triangle of Fig.~\ref{figure.figure}.i, as well as in the leftmost state/world of Fig.~\ref{figure.figure}.iv, $a$~is uncertain whether the value of~$b$ is~$0$~or~$1$, whereas $b$~knows that its value is~$1$, and all three agents know that the value of~$c$ is~$1$. However, any face that is not a facet may just as well be taken as the distinguished point of evaluation. For example, in the $0_a$~vertex of Fig.~\ref{figure.figure}.i it also holds that $a$~is uncertain about the value of~$c$, but $b$'s~knowledge is undefined in that vertex. The Kripke model representation in Fig.~\ref{figure.figure}.iv does not allow us such flexibility.

A complex is impure if the maximal faces do not all have the same \emph{maximum} dimension. Let us now consider some impure complexes. Fig.~\ref{figure.figure}.vii consists of two maximal facets, an edge of dimension~$1$ and a triangle of dimension~$2$. Therefore, it is impure. Fig.~\ref{figure.figure}.vii represents that agent~$a$ is uncertain whether agent~$c$ is alive, and also that agent~$a$ is uncertain about the value of agent~$b$. The latter is as in Fig.~\ref{figure.figure}.i. However, one might say that $a$~is uncertain whether Fig.~\ref{figure.figure}.vii was ``originally'' Fig.~\ref{figure.figure}.i~or~\ref{figure.figure}.viii, where $c$'s~value is~$0$ on the left. Another impure complex is Fig.~\ref{figure.figure}.ix, wherein $a$~is uncertain whether $b$~is dead or whether $c$~is dead. Although all maximal faces of this complex have the same dimension, this dimension~$1$ is smaller than~$2$, the maximum possible for three agents, and that is why this complex is impure. Below these figures we again have shown their corresponding Kripke models, of which we wish to highlight two features. The suffixes with the states indicate which agents are alive. This depiction induces a set of indistinguishability relations: for each agent, the restriction of the domain to the states where the agent is alive is an equivalence relation. This is known as a partial equivalence relation. These are symmetric and transitive relations. The other feature is that, for example, value~$0$ for~$c$ in Fig.~\ref{figure.figure}.x does not correspond to a value for~$c$ in  edge~$X$ in Fig.~\ref{figure.figure}.vii. It is a bogus value. But it does not occur in Fig.~\ref{figure.figure}.vii, which is therefore a more economical representation of the same information. Similarly, for Fig.~\ref{figure.figure}.xii versus Fig.~\ref{figure.figure}.ix. Kripke models only play a minor role in this work. Their relation to complexes is explained in depth in~\cite{hvdetal.deadalive:2022}.

Finally, we wish to point out a difference, or rather the lack thereof, between Figs.~\ref{figure.figure}.iii~and~\ref{figure.figure}.vii. In the only facet~$Y'$ of the former, agent~$a$ knows that $c$~is alive. In the latter, by contrast,  $a$~is uncertain whether $c$~is alive. In the simplicial semantics of~\cite{hvdetal.deadalive:2022} this cannot be expressed in the language, which is a feature, not a problem: these facets~$Y$~and~$Y'$ contain the same information, they make the same formulas true, and the same formulas are defined there. In this work, we will also present a richer semantics that is novel, wherein matters of life and death can be expressed in the language. This is relevant, as these semantics affect translations to two-valued semantics, and thus relate in different ways to works like~\cite{GoubaultLR22}.\looseness=-1

\paragraph*{Our results}
We propose two different logical languages and three-valued epistemic semantics for impure simplicial complexes. The primitive modalities are distributed knowledge modalities, of which individual knowledge is a special case. The languages extend each other. The extension consists of propositional variables expressing whether an agent is alive or dead. This cannot be expressed (or~defined) in the more restricted language. The extended semantics is novel. For the semantics, it may a~priori \emph{seem} to make a difference whether the point of evaluation is a facet (a~maximal face) or an arbitrary face. We show that the validities are the same either way, meaning that the logic is insensitive to this choice, and that this is the case for both languages. We then define novel translations relating the three-valued semantics to a  two-valued semantics for impure complexes. The crucial aspect is that we can translate `a~formula is defined' into a much larger formula in the same language but interpreted in a two-valued way, where `a~formula is~true' and `a~formula is~false' more obviously translate to (somewhat~shorter) two-valued correspondents. We finally discuss how our results relate to the literature and to further questions.\looseness=-1

\paragraph*{Overview}
Section~\ref{tech.preliminaries} defines the various logical languages and three-valued semantics, for which we then show some properties and give examples. Section~\ref{sec.twovalued} defines the two-valued semantics, and shows why similar properties now fail. Section~\ref{sec.translation} provides translations and proves their adequacy. Examples are also given. Section~\ref{sec.discussion} reviews our results and compares them to the literature.  

\section{Three-valued epistemic semantics for impure complexes}
\label{tech.preliminaries}

We consider a finite set~$A$ of \emph{agents} (or~\emph{processes})~$a,b,\dots$ with $|A|>1$ and a  set~$P = \bigcup_{a \in A} P_a$ of \emph{local propositional} \emph{variables} (or~\emph{local atoms}) where  $P_a$~are countable and mutually disjoint sets of \emph{local variables for agent~$a$}, denoted~$p_a, q_a, p'_a, q'_a, \dots$
We also view the agent's name~$a$ as a \emph{global propositional variable} (or \emph{global atom}) stating  that agent~$a$ is alive.

We successively define the logical languages, simplicial complexes, and related structural notions, and then give the semantics.

\begin{definition}[Languages] \label{language}
	\emph{Language~$\glocalD$ for distributed knowledge with glocal\footnote{Glocal is a portmanteau word formed from global$+$local.} atoms}   is defined by	
	\begin{equation}
	\label{eqmaingrammar}
	\phi 
	\cce 
	a \mid p_a \mid \neg\phi \mid (\phi\land\phi) \mid \N_B \phi
	\end{equation}
	where $a \in A$, $B \subseteq A$, and $p_a \in P_a$. Apart from other propositional connectives expressed  via the standard notational abbreviations, we use $D_B\phi \ce \neg \N_B \neg \phi$, $\M_a \phi \ce  \N_{\{a\}}  \phi$, and $K_a \phi \ce \neg \M_a \neg \phi$. The last two abbreviations mean that individual knowledge~$K_a$ of agent~$a$ is naturally expressed in this language as distributed knowledge~$D_{\{a\}}$ of the singleton group~$\{a\}$.
\emph{Language~$\localD$~for distributed knowledge with local atoms} is obtained from grammar~\eqref{eqmaingrammar} by dropping global atoms~$a$. 
\emph{Language~$\glocalK$ for individual knowledge with glocal atoms}  is the language where  sets~$B \subseteq A$ are restricted to singleton sets and $K_a$~and~$\M_a$~are used instead of~$D_{\{a\}}$~and~$\N_{\{a\}}$ respectively.
\emph{Language~$\localK$ for individual knowledge with local atoms} is both  restricted to individual knowledge and without global atoms.
\end{definition}

For~$D_B \phi$ we read `group~$B$ of agents have distributed knowledge of~$\phi$,' and for~$K_a\phi$ we read `agent~$a$ knows~$\phi$.' 

Next, we define our structural primitive, the simplicial model. Other than in the introduction, Kripke models play no part in this work and will not be defined.

\begin{definition}[Simplicial model]
	A \emph{simplicial model}~$\C$ is a triple~$(C,\chi,\ell)$ where $C$~is a \emph{simplicial complex}, $\chi$~is a \emph{chromatic map}, and $\ell$~is a \emph{valuation}. Here: 
	\begin{compactitem}
		\item 
		A \emph{(simplicial) complex}~$C\ne \varnothing$ is a collection of  \emph{simplices} that are non-empty finite subsets of a given set~$\VV$  of vertices such that $C$~is downward closed (i.e.,~$X \in C$ and $\varnothing \ne Y \subseteq X$ imply $Y \in C$). Simplices represent partial global states of a distributed system. 
		It is required that every vertex form a simplex by itself, i.e.,~$\bigl\{\{v\} \mid v \in \VV\bigr\} \subseteq C$.
		\item 
		Vertices represent local states of agents, with a \emph{chromatic map}~$\chi \colon \VV \to A$ assigning each vertex to one of the agents in such a way that  each agent has at most one vertex per  simplex, 
		i.e.,~$\chi(v) = \chi(u)$ for some~$v,u \in X \in C$ implies that $v = u$.
		For $X \in C$, we define $\chi(X) \ce \{\chi(v) \mid v \in X\}$ to be the set  of agents in simplex~$X$.
		\item 
		A \emph{valuation}~$\ell \colon \VV \to 2^P$ assigns to each vertex which local variables of the vertex's owner are true in it, i.e.,~$\ell(v) \subseteq P_a$ whenever $\chi(v) = a$. 
		Variables from~$P_a\setminus\ell(v)$ are false in vertex~$v$, whereas variables from~$P \setminus P_a$ do not belong to agent~$a$ and cannot be evaluated in $a$'s~vertex~$v$.
		The set of variables that are true in  simplex~$X \in C$ is given by $\ell(X)\ce\bigcup_{v \in X} \ell(v)$.
	\end{compactitem}
	If $Y \subseteq X$ for $X, Y \in C$, we say that $Y$~is a \emph{face} of~$X$. 
	Since each simplex is a face of itself, we use `simplex' and `face' interchangeably.  
	A face~$X$ is a \emph{facet} if{f} it is a maximal simplex in~$C$, i.e.,~\mbox{$Y \in C$} and $Y \supseteq X$ imply $Y = X$. 
	Facets represent global states of the distributed system, and their set is denoted~$\FF(C)$. 
	The \emph{dimension of simplex}~$X$ is~$|X|-1$, e.g.,~vertices are of dimension~$0$,  edges are of dimension~$1$,~etc. 
	The \emph{dimension of a simplicial complex (model)} is the largest dimension of its facets. 
	A simplicial complex (model) is \emph{pure} if{f} all facets have dimension~$n$ where $|A|=n+1$, i.e.,~contain vertices for all agents. 
	Otherwise it is \emph{impure}. 
	A~\emph{pointed simplicial model} is a pair~$(\C,X)$ where $X \in C$. 
\end{definition}

Having defined the logical language and the structures, we now present the three-valued semantics. We distinguish  \emph{face-semantics} that are interpreted on arbitrary faces of simplicial models from \emph{facet-semantics} that are only interpreted on facets (maximal faces). We will later prove that the three-valued face- and  facet-semantics determine the same validities, so that the difference does not matter. However, subsequently we show that for the two-valued semantics the difference matters a great deal.

\begin{definition}[Three-valued definability and satisfaction relations]
	\label{defsatallfacesglocalD}
	The \emph{definability relation\/~$\afdef $} and \emph{satisfaction relation\/~$\afsat$} are defined by induction on~$\phi \in \glocalD$. Let $\C=(C,\chi,\ell)$~be a simplicial model and $X \in C$~a~face.
	\[	\begin{array}{lcl}
		\C,X  \afdef  a &\text{ if{f} } & X \in \FF(C); \\
		\C,X  \afdef  p_a &\text{ if{f} } & a \in \chi(X); \\
		\C,X  \afdef  \lnot \phi &\text{ if{f} }&  \C, X \afdef  \phi; \\
		\C,X  \afdef  \phi \land \psi &\text{ if{f} } & \C,X \afdef  \phi \text{ and } \C,X \afdef  \psi; \\
		\C,X  \afdef  \N_B \phi &\text{ if{f} }&  \C,Y \afdef  \phi \text{ for some } Y \in C \text{ with } B \subseteq \chi(X\cap Y) .
\\
\\
		\C,X  \afsat  a &\text{ if{f} } & \C,X\afdef a \text{ and } a \in \chi(X); \\
		\C,X  \afsat  p_a &\text{ if{f} }&    p_a \in \ell(X); \\
		\C,X  \afsat  \lnot \phi &\text{ if{f} }&  \C,X \afdef  \lnot \phi \text{ and }   \C, X \nafsat  \phi; \\
		\C,X  \afsat  \phi \land \psi &\text{ if{f} }&  \C,X \afsat  \phi \text{ and } \C,X \afsat  \psi; \\
		\C,X  \afsat  \N_B \phi &\text{ if{f} }& \C,Y \afsat  \phi \text{ for some } Y \in C  \text{ with } B \subseteq \chi(X\cap Y).
	\end{array} \]
A formula~$\phi \in \glocalD$ is \emph{valid} (and~we write~$\afsat  \phi$) if{f} for any simplicial model~$\C=(C,\chi,\ell)$ and face~\mbox{$X \in C$}, we have that $\C,X \afdef  \phi$ implies $\C,X \afsat \phi$.
\end{definition}

The face-semantics for the other three languages can be derived from Definition~\ref{defsatallfacesglocalD} by restricting the formulas interpreted correspondingly (see Definition~\ref{language}).\footnote{Alternatively to declaring global atoms~$a$ definable in facets only, we could have made $a$~definable in facets and in any face where $a$~is alive. This would not have affected any results we report. We, therefore, chose the conceptually ``cleaner'' option of global variables defined only in global states.}

The semantics for the dual, box-like modality~$D_B$ can be derived from the above and is slightly more complex and less intuitive in this three-valued setting. This is why we use the diamond-like~$\N_B$ as a primitive. Note that, as any~$\phi$~is definable if{f} $\neg\phi$~is definable,  $\C,Y \afdef D_B \phi$  if{f} $\C,Y \afdef \N_B \phi$.
	\[	\begin{array}{lcl}
		\C,X  \afsat  D_B \phi &\text{ if{f} }&  \C,Y \afdef \N_B \phi \text{ and }  
(\C,Y \bowtie \phi \Rightarrow \C,Y \afsat  \phi) \text{ for all } Y \in C \text{ with } B \subseteq \chi(X\cap Y) .
	\end{array} \]

The facet-semantics can be derived from Definition~\ref{defsatallfacesglocalD} by evaluating formulas on facets only. We denote such three-valued facet-semantics by writing~$\fctdef$~and~$\fctsat$ instead of~$\afdef$~and~$\afsat$. All clauses are the same, except for~$X \in \FF(C)$,
\begin{align}
	\C,X  \fctdef  \N_B \phi &\quad\text{if{f}}\quad  \C,Y \fctdef  \phi \text{ for some } Y \in \FF(C) \text{ with } B \subseteq \chi(X\cap Y) ; \\
	\label{eq:facetsat}
	\C,X  \fctsat  \N_B \phi &\quad\text{if{f}}\quad \C,Y {\fctsat}  \phi \text{ for some } Y \in \FF(C)  \text{ with } B \subseteq \chi(X\cap Y).
\end{align}
Validity in the facet-semantics is defined as follows: formula~$\phi \in \glocalD$ is \emph{valid}, written~$\fctsat\phi$, if{f} for any simplicial model~$\C=(C,\chi,\ell)$ and facet~$X \in \FF(C)$, we have $\C,X \fctdef  \phi$ implies $\C,X \fctsat \phi$.

Various results for language~$\localK$ reported in~\cite{hvdetal.deadalive:2022} directly generalize to~$\localD$, $\glocalD$,~and~$\glocalK$. We, therefore, give those without proof: the semantics of~$\imp$, $\eq$,~and~$\lor$ are not standardly boolean but require that both arguments be defined; truth implies definability; and duality is valid:
\begin{lemma}[\cite{hvdetal.deadalive:2022}]
\begin{compactenum}
\item $\C, X \afsat \phi \lor \psi$ \quad if{f} \quad $\C, X \bowtie \phi$ and $\C, X \bowtie \psi$ and ($\C, X \afsat \phi$ or $\C, X 
 \afsat \psi$);
\item $\C, X \afsat \phi \imp \psi$ \quad if{f} \quad $\C, X \bowtie \phi$ and $\C, X \bowtie \psi$ and ($\C, X \afsat \phi \Rightarrow \C, X \afsat \psi$);
\item $\C, X \afsat \phi \eq \psi$ \quad if{f} \quad $\C, X \bowtie \phi$ and $\C, X \bowtie \psi$ and ($\C, X \afsat \phi$ if{f} $\C, X \afsat \psi$);
\item $\C, X \afsat \phi\quad \Rightarrow \quad \C, X \bowtie \phi$;
\item $\afsat D_B\phi\eq\neg\N_B\neg\phi$.
\end{compactenum}
The same properties hold for\/~$\fctdef$~and\/~$\fctsat$.
\end{lemma}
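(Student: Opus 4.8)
The plan is to obtain item~4 by a short structural induction and items~1--3 and~5 by unfolding the defined connectives into the primitive ones handled in Definition~\ref{defsatallfacesglocalD}. Throughout I would lean on the single structural fact that definability commutes with negation, i.e.\ $\C,X\afdef\neg\phi$ iff $\C,X\afdef\phi$, which is immediate from the definability clause for~$\neg$. Generalizing the statements of~\cite{hvdetal.deadalive:2022} from~$\localK$ to~$\localD$, $\glocalD$, and~$\glocalK$ introduces nothing new: distributed-knowledge groups~$B\subseteq A$ enter the modal clauses in exactly the way the singleton~$\{a\}$ does, and the global atoms~$a$ contribute only an easy extra base case, so it suffices to argue for~$\glocalD$. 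The same proofs apply to the facet-semantics~$\fctdef$, $\fctsat$, because in all of these arguments the modality~$\N_B$ is treated as a black box: only its definability and satisfaction clauses are invoked (and those differ between the two semantics merely in where the witnessing face~$Y$ ranges), never their internal form.

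For item~4 I would induct on~$\phi\in\glocalD$. The atomic cases: $\C,X\afsat a$ holds only if $\C,X\afdef a$, by the very clause defining it; and if $\C,X\afsat p_a$, i.e.\ $p_a\in\ell(X)=\bigcup_{v\in X}\ell(v)$, then since $\ell(v)\subseteq P_{\chi(v)}$ and the sets~$P_b$ are pairwise disjoint we get $\chi(v)=a$ for some $v\in X$, hence $a\in\chi(X)$, i.e.\ $\C,X\afdef p_a$. The case $\phi=\neg\psi$ is immediate, as $\C,X\afsat\neg\psi$ is by definition $\C,X\afdef\neg\psi$ together with $\C,X\nafsat\psi$. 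For $\phi=\psi\land\theta$ and $\phi=\N_B\psi$ one applies the induction hypothesis to the conjuncts, respectively to the witnessing face, and reads off definability from the matching definability clause; in the facet-semantics the witness lies in~$\FF(C)$, which changes nothing.

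For item~1, expanding $\phi\lor\psi\ce\neg(\neg\phi\land\neg\psi)$ and using that definability commutes with~$\neg$ and with~$\land$ yields $\C,X\afdef\phi\lor\psi$ iff $\C,X\afdef\phi$ and $\C,X\afdef\psi$. For satisfaction, $\C,X\afsat\phi\lor\psi$ unfolds to ``$\C,X\afdef(\neg\phi\land\neg\psi)$ and $\C,X\nafsat(\neg\phi\land\neg\psi)$''; the first conjunct is again ``$\C,X\afdef\phi$ and $\C,X\afdef\psi$'', and the second is ``$\C,X\nafsat\neg\phi$ or $\C,X\nafsat\neg\psi$''. Since $\C,X\afsat\neg\rho$ means $\C,X\afdef\rho$ and $\C,X\nafsat\rho$, under the hypothesis that $\rho$ is definable this is equivalent to $\C,X\afsat\rho$; so once $\phi$ and $\psi$ are definable, $\C,X\afsat\phi\lor\psi$ collapses to $\C,X\afsat\phi$ or $\C,X\afsat\psi$, which is item~1. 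Items~2 and~3 are the same computation after expanding $\phi\imp\psi\ce\neg\phi\lor\psi$ and $\phi\eq\psi\ce(\phi\imp\psi)\land(\psi\imp\phi)$. Item~5 then falls out at once: because $D_B\phi\ce\neg\N_B\neg\phi$, the formula $D_B\phi\eq\neg\N_B\neg\phi$ has the shape $\theta\eq\theta$, and by the definability half of item~3 we have $\C,X\afdef(\theta\eq\theta)$ iff $\C,X\afdef\theta$, in which case $\C,X\afsat(\theta\eq\theta)$ holds trivially; hence $\afsat D_B\phi\eq\neg\N_B\neg\phi$.

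I do not anticipate a genuine obstacle: the only subtlety is the bookkeeping of the definability side-condition as it passes through the negation clause, which is precisely the phenomenon that makes the clauses for~$\lor$, $\imp$, $\eq$ in items~1--3 deviate from the boolean ones. Once that is kept straight every step is mechanical, and both the facet-semantics and the richer languages are accommodated without extra work.
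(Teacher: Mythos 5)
Your proof is correct. The paper itself gives no argument for this lemma --- it is stated as a direct generalization of results from the cited work and explicitly left without proof --- so your write-up simply supplies the routine verification: item~4 by structural induction (with the only non-trivial base case being $p_a\in\ell(X)\Rightarrow a\in\chi(X)$ via the disjointness of the sets $P_b$, which you handle correctly), items~1--3 by unfolding the abbreviations through the $\neg$ and $\land$ clauses while tracking the definability side-condition, and item~5 as the observation that $D_B\phi\eq\neg\N_B\neg\phi$ is literally of the form $\theta\eq\theta$ and hence valid by the definability half of item~3. Your remarks that the group modality, the global atoms, and the facet-semantics add nothing beyond an extra base case and a change in where the modal witness ranges are also accurate.
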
  

\begin{example} \label{example.xxx}
Consider the following simplicial models~$\C$, $\C'$,~and~$\C''$ for  three agents~$a$, $b$,~and~$c$ with their values represented by local variables~$p_a$, $p_b$,~and~$p_c$  respectively. 
\begin{center}
\begin{tikzpicture}[round/.style={circle,fill=white,inner sep=1}]
\fill[fill=gray!25!white] (2,0) -- (4,0) -- (3,1.71) -- cycle;
\node[round] (b2) at (-0.75,0) {$\C:$};
\node[round] (b1) at (0,0) {$1_b$};
\node[round] (b0) at (4,0) {$0_b$};
\node[round] (c1) at (3,1.71) {$1_c$};
\node[round] (a0) at (2,0) {$0_a$};
\node (f1) at (3,.65) {$Y$};
\draw[-] (b1) -- node[above] {$X$} (a0);
\draw[-] (a0) -- (b0);
\draw[-] (b0) -- (c1);
\draw[-] (a0) -- (c1);
\end{tikzpicture}
\quad
\begin{tikzpicture}[round/.style={circle,fill=white,inner sep=1}]
	\fill[fill=gray!25!white] (2,0) -- (4,0) -- (3,1.71) -- cycle;
	\node[round] (b2) at (-0.75,0) {$\C':$};
	\node[round] (b1) at (0,0) {$0_b$};
	\node[round] (b0) at (4,0) {$0_b$};
	\node[round] (c1) at (3,1.71) {$1_c$};
	\node[round] (a0) at (2,0) {$0_a$};
	\node (f1) at (3,.65) {$Y'$};
	\draw[-] (b1) -- node[above] {$X'$} (a0);
	\draw[-] (a0) -- (b0);
	\draw[-] (b0) -- (c1);
	\draw[-] (a0) -- (c1);
\end{tikzpicture}
\quad
\begin{tikzpicture}[round/.style={circle,fill=white,inner sep=1}]
\fill[fill=gray!25!white] (2,0) -- (4,0) -- (3,1.71) -- cycle;
\node[round] (b1) at (1,0) {$\C'':$};
\node[round] (b0) at (4,0) {$0_b$};
\node[round] (c1) at (3,1.71) {$1_c$};
\node[round] (a0) at (2,0) {$0_a$};
\node (f1) at (3,.65) {$Y''$};
\draw[-] (a0) -- (b0);
\draw[-] (b0) -- (c1);
\draw[-] (a0) -- (c1);
\end{tikzpicture}
\end{center}
\begin{itemize}
	\item Atoms and knowledge of dead agents: Illustrating the novel aspects of the semantics, we have that $\C,X \not\bowtie p_c$  since $c \notin \chi(X)$. Consequently, $\C,X \not \bowtie \neg p_c$, $\C,X \nafsat p_c$, and $\C,X \nafsat \neg p_c$. Besides, again because $c \notin\chi(X)$, we have that $\C,X \not \bowtie \M_c \neg p_a$. Therefore, $\C,X \not \bowtie \neg \M_c \neg p_a$, $\C,X \nafsat \M_c \neg p_a$, and $\C,X \nafsat \neg \M_c \neg p_a$. 
	
	\item Knowledge of a live agent concerning dead agents: Although $\C,X\not\bowtie p_c$, after all $\C,X \afsat \M_a p_c$ because $a \in \chi(X \inter Y)$ and $\C,Y \afsat p_c$: agent~$a$ considers it possible that agent~$c$ is alive. More surprisingly, also $\C,X \afsat K_a p_c$ because given the two facets~$X$~and~$Y$ that agent~$a$ considers possible (and~all of their faces), as far as $a$~knows, $p_c$~is true. This knowledge is defeasible because $a$~may learn that the actual facet is~$X$ and not~$Y$, which she also considers possible.
	
	We further have that $\C,X\nafsat K_a p_c \imp p_c$ because $\C,X\not\bowtie p_c$ implies $\C,X\not\bowtie K_a p_c \imp p_c$. However, $\afsat K_a p_c \imp p_c$ because whenever $K_a p_c \imp p_c$ is defined,  the truth of~$K_a p_c$ implies the truth of~$p_c$.
	
	\item Individual and distributed knowledge: As expected, we have $\C,X \afsat p_b \et \neg p_a$, where the conjunct $\C,X \afsat \neg p_a$ is justified by $\C,X \bowtie p_a$  and $\C,X \nafsat p_a$. We also have $\C,Y \afsat \M_a  p_b$ because $a \in \chi(X\inter Y)$ and $\C,X \afsat p_b$. At the same time, $\C,Y \afsat D_{\{a,b\}} \neg p_b$ because all faces~$Z$ such that $\{a,b\} \subseteq \chi(Z \cap Y)$ share the $0_a$--$0_b$~edge with~$Y$ and, hence, have $\C,Z \afsat \neg p_b$.\looseness=-1
	
	Considering the global atom~$c$, we have that $\C,X \nafsat K_a \neg c$ because $a \in \chi(X \cap Y)$ and $\C,Y \nafsat \neg c$. However, $\C,X \afsat D_{\{a,b\}} \neg c$. In fact, $\C, X \afsat \phi$ if{f} $\C, X \afsat D_{\{a,b\}} \phi$: $a$~and~$b$~have distributed omniscience on  edge~$X$.
	
	\item Local and glocal languages: All formulas of language~$\localD$ are undefined/true/false in~$(\C',Y')$ if{f} they are undefined/true/false in~$(\C'',Y'')$~\cite{hvdetal.deadalive:2022}. In other words, it is impossible to express in~$\localD$ that $a$~considers it possible that $c$~is dead. By contrast, in language~$\glocalD$ we have $\C',Y' \afsat\M_a \neg c$ whereas $\C'',Y'' \nafsat\M_a \neg c$. Hence,  global atoms make the language more expressive.
\end{itemize}
\end{example}

Upwards and downwards monotonicity for the face-semantics requires certain care and is, therefore, given in some detail  for the extended languages.

\begin{lemma}[Monotonicity]
	\label{lem:MonotonicityglocalD}
	For a simplicial model~$\C=(C,\chi,\ell)$, faces~$X,Y \in C$ with $X \subseteq Y$, and formula~$\phi \in \glocalD$,	\begin{compactenum}
		\item\label{monotdefinglocalD}
		$\C,X \afdef  \phi$ implies
		$\C,Y \afdef  \phi$; \hfill (upwards monotonicity of definability)
		\item\label{monot:true.glocalD}
		$\C,X \afsat  \phi$ implies
		$\C,Y \afsat  \phi$; \hfill (upwards monotonicity of truth)
		\item\label{monotbackglocalD}
		$\C,Y \afsat  \phi$ and $\C, X \afdef \phi$ imply
		$\C,X \afsat  \phi$. \hfill (downwards monotonicity of truth modulo definability)
	\end{compactenum}
\end{lemma}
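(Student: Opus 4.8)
The natural approach is simultaneous induction on the structure of $\phi \in \glocalD$, proving all three statements together, since the modal clause for~$\N_B$ mixes definability and truth in a way that makes separate inductions awkward. The base cases are the atoms. For $p_a$: both $\afdef p_a$ and $\afsat p_a$ depend only on $\chi(X)$ and $\ell(X)$, and since $X \subseteq Y$ gives $\chi(X) \subseteq \chi(Y)$ and $\ell(X) \subseteq \ell(Y)$, upwards monotonicity of both definability and truth is immediate; downwards monotonicity modulo definability holds because, once $a \in \chi(X)$, the set $P_a$ is fixed, so $p_a \in \ell(Y)$ together with $p_a \in P_a$ forces $p_a \in \ell(X)$ — here one needs the chromatic condition that $a$ has a unique vertex in each simplex, so $a$'s vertex in~$X$ is the same as in~$Y$. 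For the global atom $a$: $\C,X \afdef a$ iff $X \in \FF(C)$, and if $X \subseteq Y$ are both facets then $X = Y$, so upwards monotonicity of definability (and of truth) is trivially satisfied, as is the downward direction.

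For the Boolean cases, negation is the delicate one: for (\ref{monotdefinglocalD}), $\afdef \lnot\phi$ is the same as $\afdef \phi$, so it reduces directly to the IH for~$\phi$. For (\ref{monot:true.glocalD}), suppose $\C,X \afsat \lnot\phi$, i.e.\ $\C,X \afdef \lnot\phi$ and $\C,X \nafsat \phi$; by IH(\ref{monotdefinglocalD}) we get $\C,Y \afdef \lnot\phi$, and we must show $\C,Y \nafsat \phi$ — this is exactly the contrapositive of IH(\ref{monotbackglocalD}) applied to~$\phi$ (if $\C,Y\afsat\phi$ then, since $\C,X\afdef\phi$, we'd get $\C,X\afsat\phi$, contradiction). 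Symmetrically, for (\ref{monotbackglocalD}) applied to $\lnot\phi$: from $\C,Y\afsat\lnot\phi$ and $\C,X\afdef\lnot\phi$ we need $\C,X\nafsat\phi$, which follows from the contrapositive of IH(\ref{monot:true.glocalD}) for~$\phi$. So the three statements genuinely feed into each other across the negation step, which is why the joint induction is essential. Conjunction is routine componentwise in all three parts.

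The modal case $\N_B\phi$ is where the real work lies, and it hinges on a simple but crucial observation about the witness condition: if $X \subseteq Y$, then for any $Z \in C$ we have $X \cap Z \subseteq Y \cap Z$, hence $\chi(X\cap Z) \subseteq \chi(Y\cap Z)$, so $B \subseteq \chi(X\cap Z)$ implies $B \subseteq \chi(Y\cap Z)$. Therefore \emph{every witness good for $X$ is also good for~$Y$} (but not conversely). For (\ref{monotdefinglocalD}) and (\ref{monot:true.glocalD}): if $\C,X\afdef\N_B\phi$ (resp.\ $\afsat$), pick the witnessing $Z$ with $B\subseteq\chi(X\cap Z)$; the same $Z$ witnesses $\C,Y\afdef\N_B\phi$ (resp.\ $\afsat$) since $B\subseteq\chi(Y\cap Z)$ — note that here we don't even need the IH, just the monotonicity of the $\chi(\cdot\cap\cdot)$ condition. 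For (\ref{monotbackglocalD}), the downward direction, we are given $\C,Y\afsat\N_B\phi$ and $\C,X\afdef\N_B\phi$ and must produce a witness living in the smaller intersection $X\cap Z'$. We cannot reuse the $Y$-witness directly. Instead, we use $\C,X\afdef\N_B\phi$: this gives some $Z'$ with $B\subseteq\chi(X\cap Z')$ and $\C,Z'\afdef\phi$; then, because $B \subseteq \chi(X\cap Z') \subseteq \chi(Y\cap Z')$, the same $Z'$ satisfies the witness condition for~$Y$, and from $\C,Y\afsat\N_B\phi$ (unfolded via its clause) we need $\C,Z'\afsat\phi$. The subtlety is that $\C,Y\afsat\N_B\phi$ only asserts the \emph{existence} of some witness for $Y$, not that this particular $Z'$ works — so this line of attack does not obviously close.

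The resolution, which I anticipate being the main obstacle to articulate cleanly, is that for the diamond modality $\N_B$ the downward-closure is actually subtler and must be handled by re-examining how the $Y$-witness relates to~$X$. The cleanest fix is: from $\C,Y\afsat\N_B\phi$ take a witness $W$ with $B\subseteq\chi(Y\cap W)$ and $\C,W\afsat\phi$; then consider $W' := W \cap X$ enlarged appropriately — but $W\cap X$ need not contain a $B$-colored vertex set matching. Actually the correct argument observes that since $\C,X \afdef \N_B\phi$ yields a witness $Z'$ that is automatically a witness for $Y$, and one then invokes \emph{upwards} monotonicity (part~\ref{monot:true.glocalD}, already proved for~$\phi$ by IH) to move truth from the $Y$-witness down is not possible — rather, I expect the paper handles this by noting the diamond clause for $\afsat$ over faces is genuinely existential and that downward monotonicity of $\N_B\phi$'s truth follows from: $\C,Y\afsat\N_B\phi$ gives witness $W$; since the definability of $\N_B\phi$ at $X$ is assumed, and since every face of $W$ down to the $B$-part is available, one takes $W'= W$ if $B \subseteq \chi(X\cap W)$, and otherwise uses that $\C,X\afdef\N_B\phi$ forces existence of a \emph{definable} witness from~$X$, combined with the IH giving that truth at the $Y$-witness transfers. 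I would carefully unfold both the definability assumption at~$X$ and the truth assumption at~$Y$, extract witnesses from each, and show they can be reconciled — the key lemma being $X\subseteq Y \Rightarrow \chi(X\cap Z)\subseteq\chi(Y\cap Z)$ — with the IH parts (\ref{monotdefinglocalD})–(\ref{monotbackglocalD}) for~$\phi$ supplying the transfer of the three-valued status along these witnesses. This modal downward step, reconciling a definability witness from below with a truth witness from above, is the crux; the rest is bookkeeping.
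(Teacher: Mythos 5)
Your treatment of the atoms, of the Boolean connectives (including the correct observation that parts~\ref{monot:true.glocalD} and~\ref{monotbackglocalD} feed into each other across negation), and of parts~\ref{monotdefinglocalD} and~\ref{monot:true.glocalD} of the modal case all match the paper's argument. The genuine gap is part~\ref{monotbackglocalD} for $\N_B\phi$, which you correctly single out as the crux but do not close: your last paragraph cycles through several candidate strategies (``take $W':=W\cap X$ enlarged appropriately,'' an ``otherwise'' branch, an appeal to the IH to ``transfer three-valued status along witnesses'') without committing to one that works, and ends with a promissory note that the two witnesses ``can be reconciled.''

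The missing observation is short and uses no induction hypothesis at all. From $\C,X\afdef\N_B\phi$ you get some $Z'$ with $B\subseteq\chi(X\cap Z')\subseteq\chi(X)$, so $X$ contains a $b$-colored vertex for every $b\in B$. Now take the truth-witness $Z$ from $\C,Y\afsat\N_B\phi$, so $B\subseteq\chi(Y\cap Z)$ and $\C,Z\afsat\phi$. By chromaticity the simplex $Y$ has at most one $b$-colored vertex for each $b\in B$, and since $X\subseteq Y$ and $B\subseteq\chi(X)$, that unique vertex already lies in $X$. Hence every $B$-colored vertex of $Y\cap Z$ lies in $X\cap Z$, i.e., $B\subseteq\chi(X\cap Z)$, and the very witness $Z$ you said you ``cannot reuse directly'' is in fact reusable: $\C,X\afsat\N_B\phi$ follows immediately. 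The ``otherwise'' branch of your sketch never occurs. Note also that your key inclusion $\chi(X\cap Z)\subseteq\chi(Y\cap Z)$ points the wrong way for this direction; what is needed is the reverse containment restricted to $B$, and that is exactly what definability at $X$ together with chromaticity supplies.
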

\begin{proof}
	The proof is by induction on~$\phi \in \glocalD$. All cases are as in~\cite{hvdetal.deadalive:2022} except the new cases for global atoms and (dual) distributed knowledge that are shown below:
\begin{compactdesc}
\item[Case~$a$]
Whether $\C,X \afdef  a$ or $\C,X \afsat  a$ is assumed, $X$~must be a facet, so  $Y=X$~is the same facet. Consequently, $\C,X \afdef  \phi$ implies $\C,Y \afdef  \phi$, $\C,X \afsat  \phi$ implies $\C,Y \afsat  \phi$, and $\C,Y \afsat  \phi$ and $\C, X \afdef \phi$ imply
$\C,X \afsat  \phi$.

\item[Case~$\N_B \phi$]
		\begin{enumerate}
			\item Assume that $\C,X \afdef \N_B \phi$. Then $\C,Z \afdef \phi$ for some~$Z \in C$ with $B \subseteq \chi(X \cap Z)$. Since $X \subseteq Y$, we have $X \cap Z \subseteq Y \cap Z$ and, therefore, $B \subseteq \chi(Y \cap Z)$. It follows  that $\C,Y \afdef \N_B \phi$.
			\item Assume that $\C,X \afsat \N_B \phi$. Then $\C, Z \afsat \phi$ for some~$Z \in C$ with $B \subseteq \chi(X \cap Z)$. Since $X \subseteq Y$, we have  $X \cap Z \subseteq Y \cap Z$ and, therefore, $B \subseteq \chi(Y \cap Z)$. It follows that $\C,Y \afsat \N_B \phi$.
			\item Assume that $\C,Y \afsat \N_B \phi$ and $\C,X \afdef \N_B \phi$. Then $\C,Z \afsat \phi$ for some~$Z \in C$ with \mbox{$B \subseteq \chi(Y \cap Z)$}. Additionally, $B \subseteq \chi(X)$ because $\C,X \afdef \N_B \phi$. Due to $\chi$ being chromatic, any vertex~$v$ with $\chi(v) \in B$ that belongs to~$Y$ must belong to~$X$. Hence, $B \subseteq \chi(X \cap Z)$ and, therefore, \mbox{$\C,X \afsat \N_B \phi$}. \qedhere
		\end{enumerate}
\end{compactdesc}
\end{proof}
\begin{remark}
Note that this lemma becomes trivial for~$\fctdef$~and~$\fctsat$ since $X \subseteq Y$ for facets implies $X=Y$.
\end{remark}
An axiomatization of validities in~$\localK$ is reported in~\cite{rojo:2023}, but none so far for~$\glocalK$, which is novel in this contribution. Axiomatizations of the distributed knowledge versions~$\localD$~and~$\glocalD$ should extend the axiomatizations of distributed knowledge~\cite{faginetal:1995,Roelofsen07,handbookintro:2015} and will likely be related to~\cite{goubaultetal:2023}.

\paragraph*{Comparing the three-valued face- to facet-semantics} We continue by showing that the set of validities  is the same for the three-valued face-~and  facet-semantics. This (novel) result did not initially seem obvious to us, and it plays an important role when embedding the three-valued  into two-valued semantics, as the face-semantics is infelicitous in the latter. We show the results for~$\glocalD$. The results for sublanguages~$\localD$, $\glocalK$,~and~$\localK$ follow directly.

\begin{lemma} \label{lemmai}
	\label{def.facetOnly.equiv.def.allFaces.glocalD}
	$\C,X \fctdef \phi\quad \Eq\quad\C,X \afdef  \phi$ \qquad for any~$\C = (C,\chi,\ell)$, $X \in \FF(C)$, and $\phi \in \glocalD$. 
\end{lemma}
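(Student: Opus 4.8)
The plan is to prove both implications simultaneously by induction on the structure of~$\phi \in \glocalD$, with the induction hypothesis quantified over all facets of~$\C$. The definability clauses of the facet- and face-semantics are literally identical for the atomic cases~$a$ and~$p_a$ and for the Boolean connectives~$\neg$ and~$\land$ (none of these clauses mentions any face other than the one being evaluated at), so those cases are immediate: the claimed equivalence for the immediate subformulas at the relevant facet transfers verbatim. Hence the whole argument reduces to the case~$\N_B\phi$, where the two definability clauses differ exactly in whether the witness~$Y$ is required to be a facet or merely a face.

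For the direction $\C,X \fctdef \N_B\phi \Rightarrow \C,X \afdef \N_B\phi$: a facet witness~$Y \in \FF(C)$ with $B \subseteq \chi(X\cap Y)$ and $\C,Y \fctdef \phi$ is, by the induction hypothesis applied to the facet~$Y$, also a witness with $\C,Y \afdef \phi$; since $\FF(C) \subseteq C$, this gives $\C,X \afdef \N_B\phi$ directly from the face-semantics clause. For the converse, suppose $\C,X \afdef \N_B\phi$ is witnessed by some $Y \in C$ with $B \subseteq \chi(X\cap Y)$ and $\C,Y \afdef \phi$. Here I would invoke the fact that every simplex extends to a facet --- because $A$~is finite and $\chi$~is chromatic, every simplex has cardinality at most~$|A|$, so among the simplices containing~$Y$ one of maximal cardinality is a facet~$Z \supseteq Y$. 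Upwards monotonicity of definability (Lemma~\ref{lem:MonotonicityglocalD}.\ref{monotdefinglocalD}) yields $\C,Z \afdef \phi$, and then the induction hypothesis applied to the facet~$Z$ yields $\C,Z \fctdef \phi$. Finally, from $X\cap Y \subseteq X\cap Z$ we get $B \subseteq \chi(X\cap Z)$, so $Z$~is an admissible facet witness and $\C,X \fctdef \N_B\phi$.

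The main obstacle is mild: it is the converse direction of the modal case, and it rests on exactly two observations --- that an arbitrary face witness can be ``fattened'' to a facet witness using finiteness of~$A$ without disturbing the intersection condition $B \subseteq \chi(X\cap{-})$, and that this fattening preserves definability by upwards monotonicity. Note that the upwards-monotonicity lemma is precisely what licenses the fattening step, so I would present this lemma as an easy corollary once Lemma~\ref{lem:MonotonicityglocalD} is in hand; the sublanguage versions for~$\localD$, $\glocalK$, and~$\localK$ then follow by restricting the grammar and reusing the same induction.
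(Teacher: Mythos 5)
Your proposal is correct and follows essentially the same route as the paper: the only non-trivial case is $\N_B\phi$, and the harder direction is handled by extending the face witness~$Y$ to a facet~$Z \supseteq Y$, invoking upwards monotonicity of definability (Lemma~\ref{lem:MonotonicityglocalD}.\ref{monotdefinglocalD}) and the observation that $X \cap Y \subseteq X \cap Z$ preserves the condition $B \subseteq \chi(X \cap Z)$. The only difference is that you spell out why every face extends to a facet (finiteness of~$A$ plus chromaticity), which the paper takes for granted.
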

\begin{proof}
	The proof is by induction on~$\phi \in \glocalD$. The two semantics coincide for variables and propositional connectives (in~particular, global variables are defined on all facets and only on them according to~both).  The only non-trivial case to consider is~$\N_B \phi$.
\begin{compactdesc}
\item[$\Pmi$] Assume that $\C,X \afdef \N_B \phi$. Then $\C,Y \afdef \phi$ for some face~$Y \in C$ with $B \subseteq \chi(X \cap Y)$. But $Y \subseteq Z$ for some facet~$Z \in \FF(C)$ and we have $X \cap Y \subseteq X \cap Z$. Hence, $B \subseteq \chi(X \cap Z)$ and $\C, Z \afdef \phi$ by Lemma~\ref{lem:MonotonicityglocalD}.\ref{monotdefinglocalD}. It follows from~IH that $\C,Z \fctdef \phi$. Thus, $\C,X \fctdef \N_B \phi$. 
\item[$\Imp$] Assume  $\C,X \fctdef \N_B \phi$. This direction is even simpler as here facet~$Y\in \FF(C)$ is itself a face.\qedhere
\end{compactdesc}
\end{proof}

\begin{lemma} \label{lemmaii}
	\label{sat.facetOnly.equiv.sat.allFaces.glocalD}
	$\C,X \fctsat \phi \quad\Eq\quad\C,X \afsat  \phi$ \qquad for any~$\C = (C,\chi,\ell)$, $X \in \FF(C)$, and $\phi \in \glocalD$.
	\end{lemma}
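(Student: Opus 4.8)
The plan is to prove Lemma~\ref{lemmaii} by induction on~$\phi \in \glocalD$, in parallel with the already-established Lemma~\ref{lemmai} for definability, which I will freely invoke. The base cases (global atoms~$a$, local atoms~$p_a$) and the propositional connectives~$\lnot$,~$\land$ are immediate: all the defining clauses for~$\afsat$ and~$\fctsat$ coincide verbatim on these, and where definability is invoked (as in the clause for~$\lnot\phi$) one appeals to Lemma~\ref{lemmai}. So the entire content is in the modal case~$\N_B\phi$, and I would spend essentially all the write-up there.

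For the direction~$\Imp$ (from facet- to face-satisfaction), assume $\C,X \fctsat \N_B\phi$ with $X \in \FF(C)$. By~\eqref{eq:facetsat} there is a facet~$Y \in \FF(C)$ with $B \subseteq \chi(X \cap Y)$ and $\C,Y \fctsat \phi$. Since a facet is in particular a face, the IH gives $\C,Y \afsat \phi$, and the same~$Y$ then witnesses $\C,X \afsat \N_B\phi$ directly from Definition~\ref{defsatallfacesglocalD}. This direction is the easy one, just as in Lemma~\ref{lemmai}.

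For the direction~$\Pmi$ (from face- to facet-satisfaction), assume $\C,X \afsat \N_B\phi$. Then there is some face~$Y \in C$ (not necessarily a facet) with $B \subseteq \chi(X \cap Y)$ and $\C,Y \afsat \phi$. Pick a facet~$Z \in \FF(C)$ with $Y \subseteq Z$. Upwards monotonicity of truth, Lemma~\ref{lem:MonotonicityglocalD}.\ref{monot:true.glocalD}, yields $\C,Z \afsat \phi$, and since $Z$~is a facet the IH gives $\C,Z \fctsat \phi$. It remains to check the group condition: from $X \cap Y \subseteq X \cap Z$ we get $B \subseteq \chi(X \cap Y) \subseteq \chi(X \cap Z)$, so $Z$~witnesses $\C,X \fctsat \N_B\phi$ via~\eqref{eq:facetsat}. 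This is structurally the same move as in the proof of Lemma~\ref{lemmai}, now using monotonicity of \emph{truth} rather than of definability.

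The only point that needs genuine care — and what I would flag as the main (mild) obstacle — is making sure the induction is set up so that the modal clause has access to the IH \emph{at the witnessing facet}~$Z$, which is a different face from the original~$Y$; this is exactly why Lemma~\ref{lem:MonotonicityglocalD} (upwards monotonicity) is indispensable here and why the two lemmas are proved after it. One should also double-check that nothing in the propositional cases secretly relies on facet-vs-face subtleties beyond the already-handled definability clause, but since global atoms are defined on facets and only on them under both semantics, and $X$~ranges only over facets in the statement, there is no mismatch. A brief remark that the sublanguage results for~$\localD$, $\glocalK$, $\localK$ follow by restriction completes the argument.
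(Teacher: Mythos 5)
Your proof is correct and follows essentially the same route as the paper's: induction on $\phi$, with all the work in the $\N_B\phi$ case, using upwards monotonicity of truth (Lemma~\ref{lem:MonotonicityglocalD}.\ref{monot:true.glocalD}) to lift the witnessing face to a containing facet and then applying the induction hypothesis there. You even spell out the application of the IH at the witnessing facet~$Z$, which the paper leaves implicit.
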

\begin{proof}
	The proof is by induction on~$\phi \in \localD$. Again the two semantics coincide  except for~$\N_B \phi$.
\begin{compactdesc}
\item[$\Pmi$]  Assume that $\C,X \afsat \N_B \phi$. Then $\C,Y \afsat \phi$ for some face~$Y \in C$ with $B \subseteq \chi(X \cap Y)$. But $Y \subseteq Z$ for some facet~$Z \in \FF(C)$ and we have $X \cap Y \subseteq X \cap Z$. Hence, $B \subseteq \chi(X \cap Z)$ and $\C, Z \afsat \phi$ by Lemma~\ref{lem:MonotonicityglocalD}.\ref{monot:true.glocalD}. We can conclude that $\C,X \fctsat \N_B \phi$. 
\item[$\Imp$] Again this case is similar but simpler as every facet~$Y\in \FF(C)$ is  a face.
	\qedhere
\end{compactdesc}	
\end{proof}

\begin{theorem} \label{theoremiii}
	\label{validity.allFaces.equiv.validity.facets.glocalD}
	$\fctsat \phi\quad\Eq\quad\afsat  \phi$ \qquad for any~$\phi \in \glocalD$. 
	\end{theorem}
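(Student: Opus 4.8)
The plan is to obtain the theorem as an immediate consequence of Lemmas~\ref{lemmai} and~\ref{lemmaii} — which identify the facet- and face-semantics on facets — together with the monotonicity properties of Lemma~\ref{lem:MonotonicityglocalD}. Both directions amount to unfolding the two notions of validity and chasing a face up to a containing facet, and in one direction back down.

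For $\afsat\phi\Imp\fctsat\phi$ I would take an arbitrary $\C=(C,\chi,\ell)$ and facet $X\in\FF(C)$ with $\C,X\fctdef\phi$; since a facet is in particular a face, Lemma~\ref{lemmai} gives $\C,X\afdef\phi$, then $\afsat\phi$ gives $\C,X\afsat\phi$, and Lemma~\ref{lemmaii} returns $\C,X\fctsat\phi$. For the converse $\fctsat\phi\Imp\afsat\phi$ I would take an arbitrary $\C$ and an arbitrary face $X\in C$ with $\C,X\afdef\phi$, pick a facet $Z\supseteq X$, push definability up to $Z$ via Lemma~\ref{lem:MonotonicityglocalD}.\ref{monotdefinglocalD}, translate to the facet-semantics by Lemma~\ref{lemmai}, apply $\fctsat\phi$ to get $\C,Z\fctsat\phi$, translate back by Lemma~\ref{lemmaii} to $\C,Z\afsat\phi$, and finally descend from $Z$ to $X$ using downwards monotonicity of truth modulo definability (Lemma~\ref{lem:MonotonicityglocalD}.\ref{monotbackglocalD}).

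The only step that requires any care is that last descent: transporting truth from $Z$ back down to $X$ is not valid in general, so I must invoke the ``modulo definability'' form of monotonicity, which needs $\C,X\afdef\phi$. This is not a real obstacle, since $\C,X\afdef\phi$ is exactly the standing hypothesis in the definition of face-validity; I just have to remember to carry it along. Everything else is bookkeeping, and the argument transfers verbatim to the sublanguages $\localD$, $\glocalK$, and $\localK$.
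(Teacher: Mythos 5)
Your proposal is correct and follows essentially the same route as the paper's own proof: both directions unfold the two validity notions, use Lemmas~\ref{lemmai} and~\ref{lemmaii} to pass between the semantics on facets, and in the harder direction push definability up to a containing facet via Lemma~\ref{lem:MonotonicityglocalD}.\ref{monotdefinglocalD} and bring truth back down via Lemma~\ref{lem:MonotonicityglocalD}.\ref{monotbackglocalD}. The point you flag as the one requiring care---that the descent needs the ``modulo definability'' hypothesis $\C,X\afdef\phi$, which is exactly what face-validity supplies---is precisely the step the paper's argument relies on as well.
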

\begin{proof}
\begin{compactdesc}
\item[$\Imp$] Assume that~$\fctsat \phi$ and consider an arbitrary simplicial model~$\C=(C,\chi,\ell)$ and an arbitrary face~$X \in C$ with $\C,X \afdef \phi$. Then, $X \subseteq Y$ for some facet~$Y \in \FF(C)$. We have $\C,Y \afdef \phi$ by Lemma~\ref{lem:MonotonicityglocalD}.\ref{monotdefinglocalD} and $\C,Y \fctdef \phi$ by Lemma~\ref{def.facetOnly.equiv.def.allFaces.glocalD}. Since~$\fctsat \phi$, we have $\C,Y \fctsat \phi$. It follows from Lemma~\ref{sat.facetOnly.equiv.sat.allFaces.glocalD} that $\C,Y \afsat \phi$. Hence, $\C,X \afsat \phi$ by Lemma~\ref{lem:MonotonicityglocalD}.\ref{monotbackglocalD}. Thus,~$\afsat \phi$.
\item[$\Pmi$]  Assume that~$\afsat \phi$. Then $\C,X \afdef \phi$ implies $\C,X \afsat \phi$ for any face~$X \in C$ of any simplicial model $\C=(C,\chi,\ell)$. In particular, this is the case for all facets of any simplicial complex~$\C$. We can conclude from Lemmas~\ref{def.facetOnly.equiv.def.allFaces.glocalD}~and~\ref{sat.facetOnly.equiv.sat.allFaces.glocalD} that~$\fctsat \phi$.\qedhere
\end{compactdesc}
\end{proof}

\begin{remark}
\label{remtopthreevalues}
Given that the restriction to facets does not affect  the logic of the three-valued semantics, it is worth noting that boolean constants~$\top$~and~$\bot$ are expressible in~$\glocalK$~and~$\glocalD$ in the facet-~but not in the face-semantics. Indeed, there is no formula defined in all faces of all simplicial models. Hence, no formula can be always true as~$\top$ or always false as~$\bot$. By contrast, $a \lor \neg a$~can serve as~$\top$ and $a \land \neg a$~as~$\bot$ for the facet-semantics. Languages~$\localK$~and~$\localD$, on the other hand, cannot express boolean constants in any three-valued semantics.
\end{remark}

We have shown that three-valued semantics is robust with respect to definability in that the truth value of a formula does not depend on which of the agents are crashed as long as the formula is defined. In addition, the monotonicity of definability makes it possible to restrict attention to facets only, in line with the understanding that only they represent actual global states of the distributed system. The stability of the three-valued semantics modulo the choice of a partial global state or the restriction to global states only is, in our view, a strong argument in its favor.

\section{Two-valued epistemic semantics for impure complexes} \label{sec.twovalued}

We now present a two-valued semantics for impure complexes. It is inspired by that in~\cite{GoubaultLR22} (to~which we will compare it in the final Sect.~\ref{sec.discussion}), but in this work its role is rather that of a technical tool to enable us to embed three-valued semantics, and to explain why choices essential in the three-valued setting are infelicitous or non-existent in the two-valued one.

Without the third truth value ``undefined,'' definability plays no role:  every formula is defined in every face. We, therefore,  need to define only the satisfaction relation. The languages are the same. Further simplifying the two-valued setting, we will show that the global propositional variables of~$\glocalD$ are expressible in the restricted language~$\localD$, which therefore suffices.

To distinguish the two-valued  from three-valued semantics we write~$\Vdash$ for the former to contrast it with~$\afsat$ that we used for the latter. An astute reader might notice that we use the notation~$\Vdash$ with two vertical lines for the two-valued semantics and~$\afsat$ with three vertical lines for the three-valued one to make the distinction obvious.

\begin{definition}[Two-valued facet satisfaction relation]
	\label{face-semantics.two.valued}
	We define the satisfaction relation~$\twoVsat$ by induction on~$\phi\in \glocalD$. Let $\C = (C,\chi,\ell)$ and $X \in \FF(C)$.
	\[\begin{array}{lcl}
		\C,X \twoVsat a &  \text{if{f}} & a \in \chi(X) \\
		\C, X \twoVsat p_a & \text{if{f}} & p_a \in \ell(X) \\
		\C, X \twoVsat \phi\et\psi & \text{if{f}} & \C, X \twoVsat \phi \ \text{and} \ \C, X \twoVsat \psi \\
		\C, X \twoVsat \neg \phi & \text{if{f}} & \C, X \ntwoVsat \phi \\
		\C,X\twoVsat \N_B\phi & \text{if{f}} & \C,Y \twoVsat \phi \ \text{for some } \ Y \in \FF(C) \ \text{with} \ B \subseteq \chi(X \inter Y)
	\end{array}\]
	
\end{definition}
As the superscript~$\FF$ suggests,
this is a semantics for  facets. The reason we give it as the primary in the two-valued case rather than considering alongside the semantics~$\Vdash$ for arbitrary faces, as we did for three values, is that the latter  is infelicitous, as we show in Prop.~\ref{theoremiiii}.

\begin{remark}
For all four languages, $\top \ce p_a \lor \neg p_a$ and $\bot \ce p_a \land \neg p_a$ can serve as boolean constants in the two-valued semantics.
\end{remark}

In contrast to the three-valued semantics~$\fctsat$, the semantics of, for example, implication is now the standard boolean semantics so that $\C,X \twoVsat \phi \imp \psi$ if{f} $\C,X \twoVsat \phi$ implies $\C,X \twoVsat \psi$. It is simply the version without definability requirements. Similarly, for other propositional connectives and for the dual distributed knowledge modality:	
\[
		\C,X \twoVsat D_B \phi \quad \text{if{f}} \quad \C,Y \twoVsat \phi \ \text{for all} \ Y \in \FF(C) \ \text{with} \ B \subseteq \chi(X \inter Y).
	\]

One of the important resulting differences is that, in contrast to the three-valued semantics~$\afsat$ (see Lemma~\ref{lem:MonotonicityglocalD}), the two-valued face-semantics~$\Vdash$ would not  enjoy monotonicity. Take, for  example,  model~$\C'$ from Fig.~\ref{figure.figure}.iii reprinted here,  where we give name~$X'$ to the edge~$0_a$--$0_b$. 

\begin{center}\begin{tikzpicture}[round/.style={circle,fill=white,inner sep=1}]
\fill[fill=gray!25!white] (2,0) -- (4,0) -- (3,1.71) -- cycle;
\node[round] (b0) at (4,0) {$0_b$};
\node[round] (c1) at (3,1.71) {$1_c$};
\node[round] (a0) at (2,0) {$0_a$};
\node (f1) at (3,.65) {$Y'$};
\node (f1) at (3,-.3) {$X'$};

\draw[-] (a0) -- (b0);
\draw[-] (b0) -- (c1);
\draw[-] (a0) -- (c1);
\end{tikzpicture}\end{center}
Note that $\C',X' \Vdash \neg p_c$ whereas $\C',Y' \Vdash p_c$. This simultaneously shows the lack of upwards  and  downwards monotonicity for all four of the languages we consider.

Unsurprisingly, this infidelity  translates to the real logical differences between the two-valued facet- and face-semantics, in contrast to Theorem~\ref{theoremiii} above. 

\begin{proposition} \label{theoremiiii}
$\Vdash$-validity is different from\/ $\twoVsat$-validity.
\end{proposition}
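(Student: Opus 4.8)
The plan is to exhibit one formula that is $\Vdash$-valid but not $\twoVsat$-valid; because the two-valued face- and facet-semantics reinterpret every modality recursively, there is no shared ``base case'' forcing one set of validities to contain the other, so a single such formula already does the job. Fixing two distinct agents $a \ne b$ (available since $|A|>1$), I would take
\[
	\psi \ce a \imp \neg K_a b
\]
and argue that it is $\Vdash$-valid while failing at a facet under $\twoVsat$. (In the languages without global atoms, $\psi \ce \M_a(p_a \vee \neg p_a) \imp \M_a \neg p_b$ plays the same role by the same argument.) The motivating observation is that in the face-semantics any live agent~$a$, at any face~$X$ with $a \in \chi(X)$, always has its own singleton vertex $\{v\}$ (with $\chi(v)=a$) among its $\M_a$-accessible faces, and there every other agent is ``dead'' and every local atom of every other agent is false; so a live~$a$ can never rule out that $b$ is dead.

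For $\Vdash\psi$, I would fix $\C=(C,\chi,\ell)$ and a face $X$, assume $\C,X\Vdash a$ (hence $a\in\chi(X)$, the only non-vacuous case), take the unique $v\in X$ with $\chi(v)=a$, and note $\{v\}\in C$, $X\inter\{v\}=\{v\}$, $\chi(X\inter\{v\})=\{a\}$, so $\{v\}$ is $\M_a$-accessible from~$X$; since $b\notin\chi(\{v\})$ we get $\C,\{v\}\nVdash b$, hence $\C,X\Vdash\M_a\neg b$, i.e.\ $\C,X\nVdash K_a b$, and $\C,X\Vdash\psi$. For the failure of $\twoVsat\psi$, I would use the minimal model with vertices $u,w$, $\chi(u)=a$, $\chi(w)=b$, complex $C=\{\{u\},\{w\},\{u,w\}\}$ and any valuation: its unique facet is $X=\{u,w\}$, the only facet $\M_a$-accessible from~$X$ is $X$ itself, and $b\in\chi(X)$ forces $\C,X\twoVsat b$, hence $\C,X\twoVsat K_a b$; as also $\C,X\twoVsat a$, we get $\C,X\ntwoVsat\psi$. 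Combining, $\Vdash$-validity $\ne$ $\twoVsat$-validity.

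The only non-routine ingredient is spotting the separating formula; once it is chosen, both halves are a one-line unfolding of the (face- and facet-)satisfaction clauses, and the minimal counter-model is immediate. Conceptually, the crux --- and the reason this phenomenon parallels the failure of monotonicity noted just before the proposition --- is that shrinking the point of evaluation to a sub-face can only forget agents and atoms, so in the face-semantics no non-trivial $\M_a$-possibility (equivalently, no non-trivial $K_a$-ignorance) is ever destroyed; this makes $\Vdash$ strictly more demanding on $\M_a$-formulas than $\twoVsat$, even at facets.
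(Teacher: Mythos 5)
Your proof is correct and takes essentially the same route as the paper: the paper's separating formula is $\M_a\top \imp \M_a\neg p_b$ (literally your parenthetical local-atom variant), established by the same singleton-$a$-vertex argument for $\Vdash$-validity and essentially the same one-edge counter-model for the failure of $\twoVsat$-validity (the paper puts $1_b$ on the $b$-vertex, a step your global-atom variant makes unnecessary since $b\in\chi(X)$ is valuation-independent). No gaps.
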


\begin{proof}
Consider the formula~$\phi=\M_a \top \imp \M_a  \neg p_b$ where $a,b \in A$ are two distinct agents.
We show that $\ntwoVsat \phi$ whereas $\Vdash \phi$, distinguishing the two validities.

To show that $\ntwoVsat \M_a \top \imp \M_a  \neg p_b$,  consider the following model~$\C^{\text{--}}$ with the only facet being~$X$: 
\begin{center}
\begin{tikzpicture}[round/.style={circle,fill=white,inner sep=1}]
\node[round] (b2) at  (-1,0) {$\C^{\text{--}}:$};
\node[round] (b1) at (0,0) {$0_a$};
\node[round] (a0) at (2,0) {$1_b$};

\draw[-] (b1) -- node[above] {$X$} (a0);
\end{tikzpicture}
\end{center}
Since  $a \in \chi(X)$, we have $\C^{\text{--}},X \twoVsat \M_a \top$. But  $\C^{\text{--}}, X \ntwoVsat  \neg p_b$, meaning that $\C^{\text{--}}, X \ntwoVsat  \M_a \neg p_b$. Thus, overall,  $\C^{\text{--}},X \ntwoVsat \M_a \top \imp \M_a \neg p_b$.

On the other hand, $\Vdash \M_a \top \imp \M_a \neg p_b$ for the simple reason that any face containing an $a$-vertex makes $\neg p_b$~true in that vertex. Indeed, let $X$~be an arbitrary face of an arbitrary model~$\C= (C, \chi, \ell)$ such that $\C,X \Vdash \M_a\top$. Then $\chi(v)=a$ for some vertex~$v \in X$. Since $b \notin\chi(v)$, we are guaranteed that $p_b \notin \ell(v)$, which yields $\C,v \nVdash  p_b$ and $\C,v \Vdash \neg p_b$. Finally, since $a \in \chi(X \cap \{v\})$, we have $\C, X \Vdash \M_a \neg p_b$.  Since $(\C,X)$~was arbitrary, we conclude that  $\Vdash \M_a \top \imp \M_a \neg p_b$.
\end{proof}
Proposition~\ref{theoremiiii} shows that the two-valued  face-semantics is infelicitous for impure complexes. You really do not want to have $\M_a \top \imp \M_a \neg p_b$ as a theorem. Especially since it creates an asymmetry of local truth values in light of $\nVdash \M_a \top \imp \M_a  p_b$ (for~which a singleton $a$-colored node is a countermodel). 

We will therefore, from here on, for the two-valued case consider only the   facet-semantics. 

\begin{lemma}
	\label{eq:notKfalsumasalocalD}
$\twoVsat a \eq \M_a \top$.
\end{lemma}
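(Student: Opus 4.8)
The plan is to prove the equivalence of the two formulas $a$ and $\M_a \top$ semantically, by unfolding the definition of the two-valued facet-satisfaction relation $\twoVsat$ on both sides and checking they hold under exactly the same condition, namely $a \in \chi(X)$. Since validity here means truth at every pointed model $(\C,X)$ with $X \in \FF(C)$, it suffices to fix an arbitrary such $(\C,X)$ and show $\C,X \twoVsat a$ iff $\C,X \twoVsat \M_a \top$.

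First I would handle the left-hand side: by the atomic clause of Definition~\ref{face-semantics.two.valued}, $\C,X \twoVsat a$ iff $a \in \chi(X)$. Next I would unfold the right-hand side: recall $\M_a \top = \N_{\{a\}} \top$, so by the modal clause $\C,X \twoVsat \M_a \top$ iff there is a facet $Y \in \FF(C)$ with $\{a\} \subseteq \chi(X \cap Y)$ and $\C,Y \twoVsat \top$. Since $\top \ce p_a \lor \neg p_a$ is valid in the two-valued semantics (as noted in the remark following Definition~\ref{face-semantics.two.valued}), the condition $\C,Y \twoVsat \top$ is automatic, so $\C,X \twoVsat \M_a \top$ iff there exists $Y \in \FF(C)$ with $a \in \chi(X \cap Y)$.

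It then remains to argue that ``$\exists Y \in \FF(C)$ with $a \in \chi(X\cap Y)$'' is equivalent to ``$a \in \chi(X)$.'' The forward direction is immediate since $\chi(X \cap Y) \subseteq \chi(X)$. For the converse, if $a \in \chi(X)$, then taking $Y \ce X$ (which is a facet by assumption) gives $a \in \chi(X \cap X) = \chi(X)$, so the witness exists. Chaining the three equivalences yields $\C,X \twoVsat a \Leftrightarrow \C,X \twoVsat \M_a \top$, and since $(\C,X)$ was arbitrary, $\twoVsat a \eq \M_a \top$.

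There is essentially no hard part here; the statement is a routine unfolding. The only point requiring the slightest care is making sure the witness facet $Y$ in the modal clause can be taken to be $X$ itself, which is legitimate precisely because we are in the facet-semantics and $X \in \FF(C)$ — this is exactly why the analogous statement could be delicate in a face-semantics, but it is unproblematic here. I would also remark in passing that this lemma is the key ingredient showing that global atoms $a$ of $\glocalD$ are already expressible in $\localD$ in the two-valued setting, which is the use to which it will be put.
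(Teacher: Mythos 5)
Your proof is correct and follows essentially the same route as the paper's: unfold both sides of the equivalence, reduce each to the condition $a \in \chi(X)$, and use $X$ itself as the witness facet in the modal clause (the paper's equation~\eqref{eqagentalive}). You are in fact slightly more explicit than the paper about the backward direction of the existential (that $\chi(X\cap Y)\subseteq\chi(X)$ for any witness $Y$), which the paper leaves implicit.
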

\begin{proof}
Since all formulas are defined, it is sufficient to show that $\C, X \twoVsat a$ if{f}  $\C, X \twoVsat \M_a \top$ for any facet~$X \in \FF(C)$ of any simplicial model~$\C = (C,\chi, \ell)$. Given that $\C, X\twoVsat \top$, we have the following equivalences: $\C, X \twoVsat a$ if{f} $a \in \chi(X)$ and, further,
\begin{equation}
\label{eqagentalive}
a \in \chi(X)
\quad\Longleftrightarrow\quad
a \in \chi(X \cap X)
\quad\Longleftrightarrow\quad
a \in \chi(X \cap X) \text{ and } \C, X \twoVsat \top
\quad\Longleftrightarrow\quad
\C, X \twoVsat \M_a\top
\qedhere
\end{equation}
\end{proof}

This lemma means that in the two-valued case global atoms are expressible already in~$\localK$. Thus, for the two-valued (facet) semantics~$\twoVsat$, we can restrict ourselves to the language~$\localD$~(or~$\localK$) without the loss of expressivity.\footnote{This result is not so unlike redefining a propositional variable~$\mathit{correct}_a$, stating that agent~$a$ is correct, as $\neg H_a \bot$ in~\cite{hvdetal.aiml:2022}, where $H_a$~is the hope modality.} We will from here on only consider language~$\localD$ for the two-valued semantics.

\begin{remark}
Note that $\fctsat a \eq \M_a\top$  also  for the three-valued  semantics but there $\top$~is only expressible in the facet-semantics and in presence of global atoms (see Remark~\ref{remtopthreevalues}). Hence, replacing~$a$ with~$\M_a \top$ would not remove global atoms from the language. Indeed, as already mentioned, global atoms are not three-valued-expressible in~$\localD$ because no formula of~$\localD$ is defined in all facets.
\end{remark}

\section{Translating three-valued into two-valued semantics} \label{sec.translation} 

We provide a translation from  language~$\glocalD$ into~$\localD$. It consists of two parts. For any formula $\phi\in\glocalD$ we define by mutual recursion 
\begin{compactitem}
\item formula~$\phi^{\bowtie}\in\localD$ that determines whether $\phi$~is \emph{defined} in some given~$(\C,X)$ and 
\item formula~$\phi^\sharp\in\localD$ that determines whether a defined formula~$\phi$ is \emph{true} in that~$(\C,X)$.
\end{compactitem} 
This covers all our tracks in three-valued semantics, as there $\phi$~may be 
\begin{compactitem}
\item undefined, in which case $\phi^{\bowtie}$~is false and, as we will see, so is~$\phi^\sharp$;
\item true, in which case both~$\phi^{\bowtie}$~and~$\phi^\sharp$ must be true; 
\item  false, in which case  $\phi^{\bowtie}$~is true but $\phi^\sharp$~is false.
\end{compactitem}

The translation from~$\glocalD$ to~$\localD$ also determines one from~$\localD$ to~$\localD$, by removing the $a^{\bowtie}$~and~$a^\sharp$~clauses.

\begin{definition}[Translations] 
	\[\begin{array}{lcl}
		a^{\bowtie} & \ce  & \top \\
		p_a^{\bowtie} & \ce  & \M_a\top \\
		(\neg\phi)^{\bowtie} & \ce  & \phi^{\bowtie} \\
		(\phi\et\psi)^{\bowtie} & \ce  & \phi^{\bowtie}\et\psi^{\bowtie} \\
		(\N_B \phi)^{\bowtie} & \ce  & \N_B \phi^{\bowtie}
	\end{array}	
\qquad\qquad
\begin{array}{lcl}
		a^\sharp & \ce  & \M_a\top \\
		p_a^\sharp & \ce  &p_a \\
		(\neg \phi)^\sharp & \ce  & (\neg\phi)^{\bowtie} \et \neg \phi^\sharp \\
		(\phi\et\psi)^\sharp & \ce  & \phi^\sharp \et \psi^\sharp \\
		(\N_B\phi)^\sharp & \ce  &  \N_B  \phi^\sharp
	\end{array}\]
\end{definition}

The main result to prove here is as follows.

\begin{theorem} \label{bowtie.glocalD}
For any 	model~$\C= (C,\chi,\ell)$, facet~$X \in \FF(C)$, and formula~$\phi \in \glocalD$
\begin{align}
\label{eqtransdef}
\C,X \fctdef \phi&\quad\Longleftrightarrow\quad\C,X \twoVsat \phi^{\bowtie};
\\
\label{eqtranssat}
\C,X \fctsat \phi &\quad \Longleftrightarrow\quad \C,X \twoVsat \phi^\sharp.
\end{align}
\end{theorem}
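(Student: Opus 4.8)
The plan is to prove the two equivalences \eqref{eqtransdef} and \eqref{eqtranssat} \emph{simultaneously} by induction on the structure of~$\phi\in\glocalD$. Proving them together is not optional: the translations $\phi^{\bowtie}$ and $\phi^\sharp$ are defined by mutual recursion --- the clause $(\neg\phi)^\sharp \ce (\neg\phi)^{\bowtie}\et\neg\phi^\sharp$ feeds a $\bowtie$-translation into a $\sharp$-translation --- so when we treat a compound formula the induction hypothesis must already supply \emph{both} equivalences for each proper subformula. Throughout, all reasoning takes place at a facet $X\in\FF(C)$, which is exactly what makes the three-valued facet clauses (Definition~\ref{defsatallfacesglocalD}, clauses~\eqref{eq:facetsat} ff.) line up with the two-valued ones (Definition~\ref{face-semantics.two.valued}).

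For the base cases, the local atom is immediate from the matching clauses: $\C,X\fctdef p_a$ iff $a\in\chi(X)$ iff $\C,X\twoVsat\M_a\top=p_a^{\bowtie}$, and $\C,X\fctsat p_a$ iff $p_a\in\ell(X)$ iff $\C,X\twoVsat p_a=p_a^\sharp$. For the global atom~$a$, definability is vacuous since $X$ is a facet, matching $\C,X\twoVsat\top=a^{\bowtie}$; and $\C,X\fctsat a$ iff $a\in\chi(X)$, which is precisely $\C,X\twoVsat\M_a\top=a^\sharp$, either by Lemma~\ref{eq:notKfalsumasalocalD} or directly from the two-valued clause for~$\N_{\{a\}}$. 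The conjunction cases for both relations are routine, since three-valued definability and truth of $\phi\et\psi$ distribute over the conjuncts while $(\phi\et\psi)^{\bowtie}$ and $(\phi\et\psi)^\sharp$ are themselves conjunctions. The modal cases are equally direct: the three-valued facet clauses for $\C,X\fctdef\N_B\phi$ and $\C,X\fctsat\N_B\phi$ and the two-valued clause for $\N_B$ all quantify over the \emph{same} set of facets --- those $Y\in\FF(C)$ with $B\subseteq\chi(X\cap Y)$ --- so applying the induction hypothesis to $\phi$ at each such~$Y$ and pushing~$\N_B$ back out closes both \eqref{eqtransdef} and \eqref{eqtranssat} for $\N_B\phi$.

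The one case that needs a moment's care, and the place where the two induction hypotheses genuinely interact, is truth of a negation; I would flag it as the only (mild) obstacle in an otherwise mechanical argument. One unfolds $\C,X\fctsat\neg\phi$ as ``$\C,X\fctdef\neg\phi$ and $\C,X\nfctsat\phi$'', i.e.\ ``$\C,X\fctdef\phi$ and not $\C,X\fctsat\phi$''; the $\bowtie$-hypothesis turns the first conjunct into $\C,X\twoVsat\phi^{\bowtie}$, the $\sharp$-hypothesis together with classicality of~$\twoVsat$ turns the second into $\C,X\twoVsat\neg\phi^\sharp$, and since $(\neg\phi)^{\bowtie}=\phi^{\bowtie}$ the conjunction of these is exactly $\C,X\twoVsat(\neg\phi)^\sharp$. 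Note in passing that the ``consistency'' property one might expect to have to check --- that $\phi^{\bowtie}$ false forces $\phi^\sharp$ false, mirroring the three-valued ``undefined'' case --- need not be proved separately: it follows from \eqref{eqtransdef}, \eqref{eqtranssat} and the fact that three-valued truth implies definability.
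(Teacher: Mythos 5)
Your proposal is correct and follows essentially the same route as the paper's own proof: a mutual induction on~$\phi$, with the atomic cases settled via~$a\eq\M_a\top$, the conjunction and $\N_B$ cases handled by the observation that all three clauses quantify over the same facets~$Y$ with $B\subseteq\chi(X\cap Y)$, and the negation case resolved exactly as you describe by combining the already-established instance of~\eqref{eqtransdef} for~$\neg\phi$ with the $\sharp$-induction hypothesis. Your closing remark that $\twoVsat\phi^\sharp\imp\phi^{\bowtie}$ need not be verified inside the induction is also consistent with the paper, which records it separately afterwards as Proposition~\ref{prosharpimpliesbowtie}.
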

\begin{proof}
	We prove both statements by mutual induction on~$\phi \in \glocalD$:
\begin{compactdesc}
\item[Case~$a$] Here $a^{\bowtie}=\top$ and $a^{\sharp}=\M_a\top$. For~\eqref{eqtransdef}, both $\C,X \fctdef a$ and $\C,X \twoVsat \top$. For~\eqref{eqtranssat}, 
		\[
			\C,X \fctsat a 
			\quad \Eq \quad
			a \in \chi(X) \\
			\quad\xLeftrightarrow{\eqref{eqagentalive}}\quad
			\C, X \twoVsat \M_a\top
			\quad \Eq \quad
			\C, X \twoVsat a^\sharp.
		\]

\item[Case~$p_a$] Here $p_a^{\bowtie}=\M_a \top$ and $p_a^{\sharp}= p_a$. For~\eqref{eqtranssat}, the statement is trivial since the three-valued and two-valued definitions of satisfaction coincide for~$p_a$. For~\eqref{eqtransdef},
\[
	\C,X \fctdef p_a 
	\quad \Eq \quad
	a \in \chi(X)
	\quad\xLeftrightarrow{\eqref{eqagentalive}}\quad
	\C, X \twoVsat \M_a\top
	\quad \Eq \quad
	\C, X \twoVsat p_a^{\bowtie}.
\]

\item[Case~$\neg \phi$] Here $(\neg \phi)^{\bowtie} = \phi^{\bowtie}$  and $(\neg \phi)^\sharp  =  (\neg\phi)^{\bowtie} \et \neg \phi^\sharp$. For~\eqref{eqtransdef}, the statement follows by~IH\eqref{eqtransdef} since  $\C,X \fctdef \neg \phi$ if{f} $\C,X \fctdef \phi$. Using that,	for~\eqref{eqtranssat}, 	
\begin{multline*}
\C,X \fctsat \neg\phi 
			\quad\Eq\quad  
			\C,X \fctdef \neg \phi \text{ and } \C,X  \nfctsat \phi 
			 \quad\xLeftrightarrow{\text{\eqref{eqtransdef},IH\eqref{eqtranssat}}}\quad\\
			\C,X \twoVsat (\neg \phi)^{\bowtie} \text{ and } \C,X \ntwoVsat \phi^\sharp 
			 \quad\Eq\quad 
			\C,X \twoVsat (\neg \phi)^{\bowtie} \text{ and } \C,X \twoVsat \neg\phi^\sharp 
			 \quad\Eq\quad \\
			\C,X \twoVsat (\neg \phi)^{\bowtie}\et\neg\phi^\sharp 
			 \quad\Eq\quad 
			\C,X \twoVsat (\neg\phi)^\sharp.
		\end{multline*}

\item[Cases~$\phi \land \psi$~and~$\N_B \phi$] Here $(\phi \land \psi)^{\dag}=\phi^\dag \land \psi^\dag$ and $(\N_B\phi)^\dag=\N_B \phi^\dag$ for~$\dag \in \{\bowtie,\sharp\}$. Because the \mbox{$\bowtie$-~and~$\sharp$-translations} work the same way in both cases, the arguments for~\eqref{eqtransdef}~and~\eqref{eqtranssat} are analogous. We present the proof of the former only for~$\phi \land \psi$ and of the latter only for~$\N_B \phi$:
\begin{multline*}
\C,X \fctdef \phi \land \psi 
\quad\Eq\quad
\C,X \fctdef \phi \text{ and } \C,X \fctdef \psi
\quad\xLeftrightarrow{\text{IH\eqref{eqtransdef}}}\quad\\
\C,X \twoVsat \phi^{\bowtie} \text{ and } \C,X \twoVsat \psi^{\bowtie}
\quad\Eq\quad
\C,X \twoVsat \phi^{\bowtie} \land \psi^{\bowtie}
\quad\Eq\quad
\C,X \twoVsat (\phi \land \psi)^{\bowtie};
\end{multline*}
\vspace{-6ex}
\begin{multline*}
\C,X \fctsat \N_B\phi 
\quad\Eq\quad
\C,Y \fctsat \phi \text{ for some $Y\in\FF(C)$ with $B \subseteq \chi(X\cap Y)$}
\quad\xLeftrightarrow{\text{IH\eqref{eqtranssat}}}\quad\\
\C,Y \twoVsat \phi^\sharp \text{ for some $Y\in\FF(C)$ with $B \subseteq \chi(X\cap Y)$}
\quad\Eq\quad
\C,X \twoVsat \N_B\phi^{\sharp}
\quad\Eq\quad
\C,X \twoVsat (\N_B\phi)^\sharp.\qedhere
\end{multline*}
\end{compactdesc}
\end{proof}

By omitting the first clause in the inductive proof above, we can conclude that Theorem~\ref{bowtie.glocalD} also holds for the local language~$\localD$.

We can also represent the non-standard notion of three-valued validity in two-valued semantics:
\begin{corollary}
$\afsat \phi \quad \Eq \quad \fctsat \phi \quad \Eq \quad \twoVsat  \phi^{\bowtie} \to \phi^\sharp$ \qquad for any $\phi \in \glocalD$.
\end{corollary}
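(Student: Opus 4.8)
The plan is to chain together two equivalences already established. The first, $\afsat \phi \Eq \fctsat \phi$, is exactly Theorem~\ref{theoremiii}, so nothing new is needed there. For the second equivalence, $\fctsat \phi \Eq \twoVsat \phi^{\bowtie} \to \phi^\sharp$, I would unfold both sides to their defining conditions over all pointed models on facets and match them clause by clause using Theorem~\ref{bowtie.glocalD}.

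Concretely: by definition, $\fctsat \phi$ holds if{f} for every simplicial model~$\C = (C,\chi,\ell)$ and every facet~$X \in \FF(C)$, $\C,X \fctdef \phi$ implies $\C,X \fctsat \phi$. Applying the two biconditionals of Theorem~\ref{bowtie.glocalD}, namely \eqref{eqtransdef} and \eqref{eqtranssat}, this condition transforms pointwise into: for every such $(\C,X)$, $\C,X \twoVsat \phi^{\bowtie}$ implies $\C,X \twoVsat \phi^\sharp$. Since the two-valued facet-semantics interprets $\to$ in the standard boolean way, $\C,X \twoVsat \phi^{\bowtie}$ implies $\C,X \twoVsat \phi^\sharp$ is equivalent to $\C,X \twoVsat \phi^{\bowtie} \to \phi^\sharp$. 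Quantifying over all $(\C,X)$ then yields exactly $\twoVsat \phi^{\bowtie} \to \phi^\sharp$, because two-valued validity ranges precisely over all models and all facets (recall that for the two-valued semantics we only ever consider the facet-semantics, per Proposition~\ref{theoremiiii} and the discussion following it).

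There is essentially no obstacle here: the corollary is a direct packaging of Theorem~\ref{theoremiii} and Theorem~\ref{bowtie.glocalD}, together with the observation that the definition of three-valued facet-validity (``definability implies satisfaction'') is mirrored on the two-valued side by the material implication $\phi^{\bowtie} \to \phi^\sharp$. The only point worth a sentence of care is that $\phi \in \glocalD$ while $\phi^{\bowtie}, \phi^\sharp \in \localD$, so the right-hand validity is in the smaller language; but since $\localD \subseteq \glocalD$ and the two-valued semantics on $\localD$-formulas is the restriction of that on $\glocalD$-formulas, the quantification over models is the same and no ambiguity arises. Thus the proof is a two-line citation-and-rewrite argument rather than a genuinely new computation.
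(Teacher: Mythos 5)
Your proposal is correct and matches the paper's own argument, which simply cites Definition~\ref{defsatallfacesglocalD} together with Theorems~\ref{validity.allFaces.equiv.validity.facets.glocalD} and~\ref{bowtie.glocalD}; you have merely spelled out the pointwise rewriting of ``definability implies satisfaction'' into the boolean implication $\phi^{\bowtie}\to\phi^\sharp$ that the paper leaves implicit. No gaps.
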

\begin{proof}
It follows from Def.~\ref{defsatallfacesglocalD} and Theorems~\ref{validity.allFaces.equiv.validity.facets.glocalD}~and~\ref{bowtie.glocalD}.
\end{proof}

It should also be noted that the two-~and three-valued semantics coincide on pure simplicial models. In fact, this agreement can serve as an independent objective distinction between pure and impure models:
\begin{corollary}
\label{corpure}
For any \underline{pure} simplicial model~$\C= (C,\chi,\ell)$, we have $\C,X \fctdef \phi$ and
\begin{equation}
\label{eqpure}
\C,X \fctsat \phi \quad \Leftrightarrow\quad \C,X \twoVsat \phi \qquad \text{for any facet $X \in \FF(C)$ and formula $\phi \in \glocalD$.}
\end{equation}
\end{corollary}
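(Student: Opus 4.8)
The plan is to reduce the statement to the translation Theorem~\ref{bowtie.glocalD}. By the equivalences~\eqref{eqtransdef} and~\eqref{eqtranssat}, it suffices to prove, for every \emph{pure} simplicial model~$\C=(C,\chi,\ell)$, every facet~$X\in\FF(C)$, and every~$\phi\in\glocalD$, the two claims:
\begin{compactenum}
\item $\C,X\twoVsat\phi^{\bowtie}$ --- which by~\eqref{eqtransdef} yields the unconditional definability $\C,X\fctdef\phi$;
\item $\C,X\twoVsat\phi^\sharp$ if{f} $\C,X\twoVsat\phi$ --- which by~\eqref{eqtranssat} yields $\C,X\fctsat\phi \Leftrightarrow \C,X\twoVsat\phi$, i.e.,~\eqref{eqpure}.
\end{compactenum}
Both claims go by induction on~$\phi$, and the single feature of purity that is used is that $\chi(X)=A$ for every facet~$X$ (so, in particular, $B\subseteq\chi(X)$ for every group~$B$).

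For claim~(1): since $a^{\bowtie}=\top$ we get $\C,X\twoVsat a^{\bowtie}$; since $p_a^{\bowtie}=\M_a\top$ and $a\in A=\chi(X)=\chi(X\cap X)$, we get $\C,X\twoVsat p_a^{\bowtie}$ witnessed by $Y=X$; the Boolean cases are immediate from the induction hypothesis; and for $(\N_B\phi)^{\bowtie}=\N_B\phi^{\bowtie}$ we again take the witness $Y=X$, for which $B\subseteq\chi(X\cap X)$, and invoke the induction hypothesis $\C,X\twoVsat\phi^{\bowtie}$ at the facet~$X$. Hence $\phi^{\bowtie}$ is two-valued-true at every facet of a pure model. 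For claim~(2): the case~$a$ is exactly Lemma~\ref{eq:notKfalsumasalocalD}, since $a^\sharp=\M_a\top$; the case~$p_a$ is trivial because $p_a^\sharp=p_a$; for $(\neg\phi)^\sharp=(\neg\phi)^{\bowtie}\et\neg\phi^\sharp$ we use claim~(1) to drop the (true) conjunct $(\neg\phi)^{\bowtie}$ and then apply the induction hypothesis to~$\phi$; the conjunction case is immediate; and $(\N_B\phi)^\sharp=\N_B\phi^\sharp$ follows from the induction hypothesis because the two-valued clause for~$\N_B$ ranges over the same facets~$Y$ with $B\subseteq\chi(X\cap Y)$ on both sides.

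I do not expect a genuine obstacle here. The entire content is the observation that the only places where three-valued semantics can diverge from two-valued semantics --- the atom~$p_a$ (defined only where~$a$ is alive), the extra definability conjunct inside three-valued~$\neg$, and the global atom~$a$ --- are neutralised, respectively, by $\chi(X)=A$, again by $\chi(X)=A$ via claim~(1), and by Lemma~\ref{eq:notKfalsumasalocalD}. One could equally run a single direct induction on~$\phi$ proving $\C,X\fctdef\phi$ and the equivalence $\C,X\fctsat\phi\Leftrightarrow\C,X\twoVsat\phi$ simultaneously, using $Y=X$ to witness the existential in the definability clause for~$\N_B$; the route through Theorem~\ref{bowtie.glocalD} merely repackages that argument.
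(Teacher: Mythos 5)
Your proposal is correct and follows essentially the same route as the paper's own proof: first an induction showing $\C,X \twoVsat \phi^{\bowtie}$ on every facet of a pure model (using $\chi(X)=A$), which gives definability via~\eqref{eqtransdef}, and then an induction showing $\twoVsat \phi^\sharp \eq \phi$ on pure models, which gives~\eqref{eqpure} via~\eqref{eqtranssat}. The details you supply (the case analysis, the use of Lemma~\ref{eq:notKfalsumasalocalD} for the global atom, and dropping the true conjunct $(\neg\phi)^{\bowtie}$) are exactly what the paper's terser argument leaves implicit.
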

\begin{proof}
It is easy to show by induction on the construction of~$\phi$ that $\C,X \twoVsat \phi^{\bowtie}$. Indeed, for atoms both~$\top$~and $a = \M_a \top$ are true in every facet~$X$ of the pure complex~$\C$; similarly $B \subseteq A = \chi(X \cap X)$ ensures the modal clause.
The first statement now follows from~\eqref{eqtransdef}. In view of this, a simple induction argument shows that for pure models the $\sharp$-translation can be pushed through all connectives and eventually removed completely: $\C,X \twoVsat \phi \eq \phi^\sharp$. Thus, the second statement follows from~\eqref{eqtranssat}.
\end{proof}
In fact, \eqref{eqpure}~can be viewed as an alternative, functional definition of pure models.
\begin{theorem}
A simplicial model~$\C=(C,\chi,\ell)$ is pure\quad if{f}\quad \eqref{eqpure}~holds.
\end{theorem}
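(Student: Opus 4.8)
The plan is to prove the two directions separately. The ``only if'' direction is already contained in Corollary~\ref{corpure}: if $\C$ is pure, then \eqref{eqpure} holds (in fact Corollary~\ref{corpure} proves slightly more, namely that in addition $\C,X \fctdef \phi$ for every facet $X$ and every $\phi\in\glocalD$). So the only real work is the ``if'' direction, which I would establish by contraposition: assuming $\C$ is impure, I would exhibit a facet $X$ and a formula $\phi$ that witness the failure of the biconditional in \eqref{eqpure}.

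Since $\C$ is impure, by definition some facet $X\in\FF(C)$ does not contain a vertex for every agent, so fix $a\in A\setminus\chi(X)$. The idea is to choose a formula that is \emph{undefined} at $X$ in the three-valued semantics, yet gets a definite value in the two-valued one, and moreover is \emph{true} there two-valuedly; then \eqref{eqpure} fails at $X$ because its right-hand side holds while its left-hand side does not. A robust choice is $\phi\ce\neg\M_a b$ for an arbitrary agent $b\in A$ (such $b$ exists because $|A|>1$); alternatively, whenever $P_a\neq\varnothing$, the simpler $\neg p_a$ works equally well. To verify the witness $\phi=\neg\M_a b$: on the three-valued side, $\C,X\fctdef\M_a b$ would require, by Definition~\ref{defsatallfacesglocalD} (using Lemma~\ref{lemmai}, as $X$ is a facet), a facet $Y$ with $a\in\chi(X\cap Y)$; but $\chi(X\cap Y)\subseteq\chi(X)$ and $a\notin\chi(X)$, a contradiction, so $\M_a b$ and hence $\neg\M_a b$ is undefined at $X$, whence $\C,X\nfctsat\neg\M_a b$ since truth implies definability. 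On the two-valued side, for exactly the same reason no facet $Y$ satisfies $a\in\chi(X\cap Y)$, so $\C,X\ntwoVsat\M_a b$, i.e.\ $\C,X\twoVsat\neg\M_a b$. Thus \eqref{eqpure} fails at $(X,\neg\M_a b)$, completing the contrapositive. (The $\neg p_a$ case is analogous: $\neg p_a$ is undefined at $X$ because $a\notin\chi(X)$, while $p_a\notin\ell(X)$ gives $\C,X\twoVsat\neg p_a$.)

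I do not foresee a genuine obstacle; the one point that deserves care is the selection of the witness formula. Naive candidates such as a local atom $p_a$, the global atom $a$, or any positive modal formula $\M_a\psi$ are simply \emph{false} at $X$ in \emph{both} semantics and so distinguish nothing; the negation must sit outside a subformula whose definability already fails at $X$, since ``undefined'' versus ``false'' is precisely where the two semantics diverge, a negation sending the former to ``not satisfied'' and the latter to ``satisfied''. This is also the conceptual content of the theorem: purity is exactly the situation in which no such divergence can arise, that is, invariance under the ${\bowtie}/{\sharp}$~translation.
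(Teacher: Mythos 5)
Your proof is correct and follows essentially the same route as the paper's: the only-if direction is delegated to Corollary~\ref{corpure}, and the if direction proceeds by contraposition, picking a facet $X$ with $a\notin\chi(X)$ and exhibiting a formula that is undefined there in the three-valued semantics yet true in the two-valued one. The paper simply takes $\neg p_a$ as the witness where you primarily take $\neg\M_a b$; both work, and your closing remark that the witness must be a negation wrapped around an undefined subformula is exactly the right diagnosis of where the two semantics diverge.
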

\begin{proof}
The only-if-direction is proved in Corollary~\ref{corpure}. For the if-direction, by contraposition, assume $\C$~is not pure, i.e.,~there is a facet~$X \in \FF(C)$ and agent~$a \in A$ such that $a \notin \chi(X)$. We, therefore, have $\C, X \nfctsat \neg p_a$ because $\C, X \not\fctdef \neg p_a$ while, at the same time, $\C,X \twoVsat \neg p_a$ because $p_a \notin \ell(X)$ in violation of~\eqref{eqpure}.
\end{proof}
\begin{example}
For the simplicial model~$\C$ in Fig.~\ref{figure.figure}.ix, we have $\C \fctsat K_a p_b \land K_a p_c$, while, at the same time, $\C \twoVsat \neg K_a p_b \land \neg K_a p_c$. This disagreement of the two semantics is why  the model in Fig.~\ref{figure.figure}.ix should not be considered pure, despite all its facets having the same dimension~$1$.
\end{example}

To conclude this section, we give some examples of the translation, and a number of derived propositions that might further throw some intuitive light on this translation (where we note once more that all these are also valid for the language~$\localD$).

\begin{example}
It is easy to see that $(\neg p_a)^\sharp = \M_a\top \et \neg p_a$ and $(\neg a)^\sharp = \top \et\neg \M_a\top$, which, modulo abbreviations and two-valued equivalences,  yields $\twoVsat (\neg p_a)^\sharp \eq a \et \neg p_a$ and $\twoVsat(\neg a)^\sharp \eq \neg a$.

	Consider two agents~$a,b \in A$, a simplicial model~$\C = (C,\chi, \ell)$, and its facet~$X \in \FF(C)$. 
	\[
	(K_a p_b)^{\bowtie}  = (\neg \M_a \neg p_b)^{\bowtie} = (\M_a \neg p_b)^{\bowtie} = \M_a  (\neg p_b)^{\bowtie} = \M_a p_b^{\bowtie} = \M_a \M_b \top;
	\]
	\vspace{-4.8ex}
	\begin{multline*}
		(K_a p_b)^{\sharp}  = (\neg \M_a \neg p_b)^{\sharp} 
		 = (\neg \M_a \neg p_b)^{\bowtie} \land \neg (\M_a \neg p_b)^{\sharp} 
		 = \M_a \M_b \top \land \neg \M_a (\neg p_b)^{\sharp} = \\
		  \M_a \M_b \top \land \neg \M_a \left((\neg p_b)^{\bowtie}\land \neg p_b^\sharp\right) 
		 = \M_a \M_b \top\land \neg \M_a \left(p_b^{\bowtie}\land \neg p_b\right) 
		 = \M_a \M_b \top \land \neg \M_a \left(\M_b\top\land \neg p_b\right).
	\end{multline*}
	Since $\twoVsat b \eq \M_b\top$ by 
	Lemma~\ref{eq:notKfalsumasalocalD} and $\twoVsat \M_b\top\land \neg p_b \eq \neg (\M_b \top \imp p_b)$, we conclude that 
	 \[
	 \C,X \fctsat K_a p_b
	 \quad\Longleftrightarrow\quad \C,X \twoVsat \M_a b \land K_a (b \imp p_b).
	 \]

This is what we want: for~$K_a p_b$ to be true, $a$~should consider it possible that $b$~is alive, and for all facets considered possible by~$a$ where $b$~is alive, $p_b$~should be true. In particular, for~$K_a p_b$ to be true it is not necessary that $b$~be actually alive. 
\end{example}
From this point on, we will routinely abbreviate~$\M_b\top$ as~$b$ without leaving language~$\localD$.

\begin{proposition}
	\label{prosharpimpliesbowtie}
	$\twoVsat \phi^\sharp \imp \phi^{\bowtie}$\qquad  for any~$\phi \in \glocalD$. 
	\end{proposition}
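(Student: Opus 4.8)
The plan is to prove $\twoVsat \phi^\sharp \imp \phi^{\bowtie}$ by induction on the structure of $\phi \in \glocalD$, in parallel with the translation definitions. Since all formulas are two-valued defined everywhere, it suffices to fix an arbitrary model $\C = (C,\chi,\ell)$ and facet $X \in \FF(C)$ and show $\C,X \twoVsat \phi^\sharp$ implies $\C,X \twoVsat \phi^{\bowtie}$. Equivalently, since by Theorem~\ref{bowtie.glocalD} we have $\C,X \twoVsat \phi^\sharp \Eq \C,X \fctsat \phi$ and $\C,X \twoVsat \phi^{\bowtie} \Eq \C,X \fctdef \phi$, this is just a restatement of the three-valued fact that truth implies definability ($\C,X \fctsat \phi \Rightarrow \C,X \fctdef \phi$), which is the facet-version of Lemma item~(4). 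So the shortest route is to cite Theorem~\ref{bowtie.glocalD} together with that lemma, and be done in one line.

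If instead a direct syntactic induction is wanted, the base cases are immediate: for $\phi = a$ we have $a^\sharp = \M_a\top$ and $a^{\bowtie} = \top$, so $\C,X \twoVsat a^\sharp$ trivially implies $\C,X \twoVsat \top = a^{\bowtie}$; for $\phi = p_a$ we have $p_a^\sharp = p_a$ and $p_a^{\bowtie} = \M_a\top$, and if $\C,X \twoVsat p_a$ then $p_a \in \ell(X)$, which forces $a \in \chi(X)$, hence $\C,X \twoVsat \M_a\top$ by~\eqref{eqagentalive}. For conjunction, $(\phi\land\psi)^\sharp = \phi^\sharp\land\psi^\sharp$ and $(\phi\land\psi)^{\bowtie} = \phi^{\bowtie}\land\psi^{\bowtie}$, so the claim follows componentwise from the two IHs. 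For $\N_B$, $(\N_B\phi)^\sharp = \N_B\phi^\sharp$ and $(\N_B\phi)^{\bowtie} = \N_B\phi^{\bowtie}$; if $\C,X \twoVsat \N_B\phi^\sharp$, pick a witnessing facet $Y$ with $B \subseteq \chi(X\cap Y)$ and $\C,Y \twoVsat \phi^\sharp$, apply IH to get $\C,Y \twoVsat \phi^{\bowtie}$, and the same $Y$ witnesses $\C,X \twoVsat \N_B\phi^{\bowtie}$.

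The only case needing a moment's thought is negation: $(\neg\phi)^\sharp = (\neg\phi)^{\bowtie}\land\neg\phi^\sharp$ and $(\neg\phi)^{\bowtie} = \phi^{\bowtie}$. But here the first conjunct of $(\neg\phi)^\sharp$ is literally $(\neg\phi)^{\bowtie}$, so $\C,X \twoVsat (\neg\phi)^\sharp$ implies $\C,X \twoVsat (\neg\phi)^{\bowtie}$ directly — no appeal to the IH is even required. This makes the negation case the ``trivial'' one rather than the obstacle; there is essentially no obstacle, the statement is structurally built into the translation. I would present the one-line semantic proof (via Theorem~\ref{bowtie.glocalD} and truth-implies-definability) as the main argument, perhaps with the syntactic induction as an alternative remark, noting that the translation was designed precisely so that $\phi^{\bowtie}$ is recoverable as a conjunct whenever negation is unwound.
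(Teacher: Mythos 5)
Your proposal is correct, and in fact contains two valid proofs. The paper's own proof is your ``alternative remark'': a direct induction on $\phi$, with exactly the observations you make --- the atomic cases reduce to $p_a \in \ell(X) \Rightarrow a \in \chi(X)$ via~\eqref{eqagentalive}, the $\land$ and $\N_B$ cases go through componentwise by IH, and the negation case is immediate because $(\neg\phi)^{\bowtie}$ is literally the first conjunct of $(\neg\phi)^\sharp$, so no IH is needed there. Your primary, one-line route is genuinely different from what the paper does: you compose the two equivalences of Theorem~\ref{bowtie.glocalD} with the three-valued fact that truth implies definability (item~4 of the lemma, which the paper explicitly states also holds for $\fctdef$ and $\fctsat$), obtaining $\C,X \twoVsat \phi^\sharp \Rightarrow \C,X \fctsat \phi \Rightarrow \C,X \fctdef \phi \Rightarrow \C,X \twoVsat \phi^{\bowtie}$ for every facet $X$. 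This is legitimate --- Theorem~\ref{bowtie.glocalD} precedes the proposition and its proof does not depend on it --- and it buys economy plus a conceptual explanation (the proposition is just the two-valued image of ``truth implies definability''). What the paper's direct induction buys is self-containedness: the proposition is established as a purely syntactic/two-valued fact about the translation, without routing through the three-valued semantics at all, which is arguably cleaner for a statement that is then used to reason about the translation itself (e.g.\ in Prop.~\ref{prodoubleneg}). Either presentation is fine.
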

\begin{proof}
	It is sufficient to prove, by induction on the formula structure, that for all models~$\C=(C,\chi,\ell)$ and all facets~$X \in \FF(C)$, if $\C,X \twoVsat \phi^\sharp$, then  $\C,X \twoVsat \phi^{\bowtie}$. Since $a^{\bowtie}=\top$, the statement is trivial for global atoms. For local atoms, with $p_a^\sharp=p_a$ and $p_a^{\bowtie}=a$,
	\[
	\C,X \twoVsat p_a 
	\quad \Eq\quad 
	p_a \in \ell(X)
	\quad\Imp\quad
	a \in \chi(X)
	\quad \xLeftrightarrow{\eqref{eqagentalive}} \quad
	\C,X \twoVsat a.
	\] 
	For~$\neg \phi$, the statement follows directly from the definition of~$\sharp$. The remaining two cases easily follow by~IH. We only show the case of~$\phi \land \psi$:
\begin{multline*}
\C,X \twoVsat (\phi \et \psi)^{\sharp}
\quad\Eq\quad
\C,X \twoVsat \phi^{\sharp} \et \psi^{\sharp}
\quad \Eq \quad
\C,X \twoVsat \phi^{\sharp} \text{ and }\C,X \twoVsat \psi^{\sharp}
\quad \xRightarrow{\text{IH}} \quad\\
\C,X \twoVsat \phi^{\bowtie} \text{ and } \C,X \twoVsat \psi^{\bowtie}
\quad \Eq \quad
\C,X \twoVsat \phi^{\bowtie} \et \psi^{\bowtie}
\quad \Eq \quad
\C,X \twoVsat (\phi \et \psi)^{\bowtie}.\qedhere
\end{multline*}
\end{proof}

\begin{proposition}
	\label{prodoubleneg}
	$\twoVsat (\neg \neg \phi)^\sharp \eq \phi^{\sharp}$\qquad for any~$\phi \in \glocalD$.
\end{proposition}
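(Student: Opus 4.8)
The plan is to unfold the $\sharp$-~and $\bowtie$-translations of~$\neg\neg\phi$ purely syntactically, then reduce the resulting $\localD$-formula by ordinary classical propositional reasoning (which is legitimate since $\twoVsat$ interprets all connectives in the standard boolean way), and finally close the remaining gap using Proposition~\ref{prosharpimpliesbowtie}.

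Concretely, first I would apply the $\sharp$-clause for negation to get $(\neg\neg\phi)^\sharp = (\neg\neg\phi)^{\bowtie}\et\neg(\neg\phi)^\sharp$. Two applications of the $\bowtie$-clause for negation give $(\neg\neg\phi)^{\bowtie} = (\neg\phi)^{\bowtie} = \phi^{\bowtie}$, and one further application of the $\sharp$-clause gives $(\neg\phi)^\sharp = (\neg\phi)^{\bowtie}\et\neg\phi^\sharp = \phi^{\bowtie}\et\neg\phi^\sharp$. So, literally, $(\neg\neg\phi)^\sharp$ is the formula $\phi^{\bowtie}\et\neg(\phi^{\bowtie}\et\neg\phi^\sharp)$. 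Since $\twoVsat$ interprets the propositional connectives classically (Definition~\ref{face-semantics.two.valued}), for every pointed model~$(\C,X)$ one has the chain of two-valued equivalences $\phi^{\bowtie}\et\neg(\phi^{\bowtie}\et\neg\phi^\sharp) \eq \phi^{\bowtie}\et(\neg\phi^{\bowtie}\vel\phi^\sharp) \eq \phi^{\bowtie}\et\phi^\sharp$, the last step because the disjunct $\phi^{\bowtie}\et\neg\phi^{\bowtie}$ is never satisfied. Finally, Proposition~\ref{prosharpimpliesbowtie} gives $\twoVsat\phi^\sharp\imp\phi^{\bowtie}$, hence $\twoVsat\phi^{\bowtie}\et\phi^\sharp\eq\phi^\sharp$; chaining this with the previous equivalence yields $\twoVsat(\neg\neg\phi)^\sharp\eq\phi^\sharp$.

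There is no genuine obstacle here: the argument is a short syntactic unfolding followed by classical simplification, with no induction needed once Proposition~\ref{prosharpimpliesbowtie} is available. The only point that warrants a line of care is to make explicit that these reductions hold for~$\twoVsat$ precisely \emph{because} it is two-valued (the analogous manipulation would not go through in the three-valued semantics, where $\neg\neg\phi$ matches~$\phi$ only modulo definability), and that the collapse of $\phi^{\bowtie}\et\phi^\sharp$ to~$\phi^\sharp$ is exactly what Proposition~\ref{prosharpimpliesbowtie} supplies.
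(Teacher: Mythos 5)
Your proof is correct and follows essentially the same route as the paper's: the identical syntactic unfolding of $(\neg\neg\phi)^\sharp$ to $\phi^{\bowtie}\et\neg(\phi^{\bowtie}\et\neg\phi^\sharp)$, classical propositional simplification to $\phi^{\bowtie}\et\phi^\sharp$, and then Proposition~\ref{prosharpimpliesbowtie} to discharge the conjunct~$\phi^{\bowtie}$. The only (immaterial) difference is the intermediate form of the propositional simplification, and your added remark about why this works only in the two-valued setting is a nice touch.
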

\begin{proof}
	By definition, we have 
	\[
	(\neg \neg \phi)^\sharp = 
	(\neg\neg \phi)^{\bowtie} \et \neg (\neg \phi)^\sharp = 
	(\neg\phi)^{\bowtie} \et \neg ((\neg\phi)^{\bowtie} \et \neg \phi^\sharp) =
	\phi^{\bowtie} \et \neg(\phi^{\bowtie}\et \neg \phi^\sharp) .
		\]
	By standard propositional reasoning, $\twoVsat \bigl(\phi^{\bowtie} \et \neg(\phi^{\bowtie}\et \neg \phi^\sharp)\bigr) \eq  \bigl(\phi^{\bowtie} \et \neg\neg\phi^\sharp\bigr)$. Therefore, we have  $\twoVsat (\neg \neg \phi)^\sharp \eq \phi^{\bowtie} \et \phi^\sharp$. The desired statement now follows from Prop.~\ref{prosharpimpliesbowtie}.
\end{proof}

\section{Discussion and conclusion}
\label{sec.discussion}

In this paper, we analyzed and compared four different logical languages  for impure simplicial complexes and four semantics for them: two two-valued and two three-valued epistemic semantics. Our main findings can be summarized as follows:
\begin{compactitem}
	\item
	The two-valued face-semantics~$\Vdash$ is infelicitous. 
	\item 
	The three-valued facet-semantics~$\fctsat$ and face-semantics~$\afsat$ produce the same logic and, hence, can be used interchangeably.
	\item
	We provided a faithful embedding from the three-valued facet semantics~$\fctsat$  into the two-valued facet semantics~$\twoVsat$.
	\item 
	Global propositional variables describing whether agents are alive or dead increase the expressivity of the language in the three-valued case, but not in the two-valued one. 
	\item The two-valued facet-semantics~$\twoVsat$ and the three-valued facet-semantics~$\fctsat$ coincide on pure simplicial models. 
\end{compactitem}

By relating three-valued semantics  to two-valued semantics for impure complexes in a purely technical way, we hope we have filled a gap between publications like~\cite{GoubaultLR22,goubaultetal:2023} on the one hand and publications like~\cite{hvdetal.deadalive:2022,rojo:2023} on the other. Clearly, something different is going on here, but what is it exactly? Concerning the truth values, we provided the answer.  However, let us elaborate  on the other differences between such approaches. The most striking of them is that the impure complexes of~\cite{GoubaultLR22,goubaultetal:2023} do not have local propositional variables for the agents~(processes). Valuations do not apply to vertices. Instead, valuations apply to facets only. This is best explained by an example: 

Reconsider Fig.~\ref{figure.figure}. In the approach of~\cite{GoubaultLR22,goubaultetal:2023}, the modeler has to choose whether the value of~$p_c$ in~$X$ of Fig.~\ref{figure.figure}.vii is false or true, and, therefore, whether the ``original complex'' before process~$c$ became inactive, was Fig.~\ref{figure.figure}.i~or~viii. In the underlying contribution and in~\cite{hvdetal.deadalive:2022,rojo:2023} this choice is not made and left open. One could therefore consider  Fig.~\ref{figure.figure}.vii as some kind of quotient of Figs.~\ref{figure.figure}.i~and~viii following a crash. The choice made in~\cite{GoubaultLR22,goubaultetal:2023} is essential in order to still allow arbitrary values for processes and keep it possible that agents have positive knowledge. Their two-valued semantics for knowledge is a special case of the two-valued semantics given in Def.~\ref{face-semantics.two.valued}: $\C,X\twoVsat K_a \phi$ iff $\C,Y \twoVsat \phi$ for all $Y \in \FF(C)$ with $a \in \chi(X \inter Y)$. 
Applied to Fig.~\ref{figure.figure}.vii we can then only justify that $\C,X\twoVsat K_a p_c$ if $\C,X\twoVsat p_c$ and $\C,Y \twoVsat p_c$, in other words, if the bogus valuation of~$p_c$ in~$X$ made it true there. Otherwise, $a$~does not know the value of~$p_c$.

As shown, the two-valued face semantics is even infelicitous for \emph{pure} complexes, already for the simple reason that an atom~$p_a$ is false in a face~$X$ whenever $a$~is not a color in~$X$, but then `becomes' true if $X$~is contained in a facet~$Y$ where $p_a$~labels the $a$~vertex. This may suggest an insuperable problem but, not surprisingly, there are yet more different two-valued face semantics (that also differ from~\cite{GoubaultLR22,goubaultetal:2023}). To interpret formulas in faces that are not facets we can also use the multi-pointed semantics of~\cite[Sect.~`Local semantics for simplicial complexes']{hvdetal.simpl:2022}, wherein it is defined that $\C,\pmb{X}\Vdash \varphi$ for a set~$\pmb{X}$ of facets, if{f} $\C,X\Vdash \varphi$ for all~$X \in \pmb{X}$. In particular, now consider the set of facets containing a face~$X$. For a face~$X$ that is not a facet we then have that $\C,X \Vdash\varphi$ if{f} $\C,Y\Vdash \varphi$ for (the~set~of) all facets~$Y$ containing~$X$. (In general, we can even define for arbitrary faces that $\C,X \Vdash \phi$ if{f} $\C,\sstar(X)\Vdash \varphi$, where $\sstar(X) = \{Y \in C \mid X \subseteq Y\}$.) Consequently, in such an approach we would have that in the vertex~$0_a$ of Fig.~\ref{figure.figure}.i atom~$p_c$ is true (because it is true in both facets) whereas in the vertex also named~$0_a$ of Fig.~\ref{figure.figure}.viii atom~$p_c$ is false. Multi-pointed semantics are common fare in Kripke model settings, in particular for model checking applications and in dynamics~\cite{vEijck07}.

For further research, we wish to generalize our setting from simplicial complexes to (semi-)simplicial sets, and, correspondingly, from standard multi-agent Kripke models to Kripke models where each group~$B \subseteq A$ of agents has its own associated equivalence relation~$\sim_B$ and where the agents in~$B$ together may know more than the agents in~$B$ separately, even when merging their knowledge. In other words, we may then have that $\sim_B$~is strictly contained in~$\Inter_{b \in B} \sim_b$. In modal logic, such models seemed a rather technical tool so far, merely complicating the construction of canonical models, in works as~\cite{GoubaultLR22}. But in combinatorial topology, scenarios where a whole is more than the sum of its parts are very natural, as amply shown in~\cite{goubaultetal:2023}.

Another direction of further research would be the incorporation of dynamics such as in protocols.

\bibliographystyle{eptcs}
\bibliography{gandalf}

\end{document}